\titleformat{\section}[block]{\large\scshape\centering}{\thesection.}{1em}{} 
\titleformat{\subsection}[block]{\large\scshape\centering}{\thesubsection.}{1em}{} 
\tikzset{nat/.style={double,double equal sign distance}}
\tikzset{nat>/.style={nat,-implies}}
\tikzset{<nat/.style={nat,implies-}}
\tikzset{descr/.style={anchor=center,fill=white}}
\newcommand{\ie}{\emph{i.e.}\xspace}
\newcommand{\eg}{\emph{e.g.}\xspace}
\newcommand{\cf}{\emph{cf.}\xspace}
\newcommand{\etc}{\xperiodafter{\emph{etc}}}
\newcommand{\loccit}{\emph{loc.~cit.}\xspace}
\newcommand{\adhoc}{\emph{ad hoc}\xspace}
\newcommand{\viceversa}{\emph{vice versa}\xspace}
\newcommand{\mutatismutandis}{\emph{mutatis mutandis}\xspace}
\newcommand{\criterium}{\emph{criterium}\xspace}
\theoremstyle{plain}
\newtheorem{theorem}{Theorem}[section]
\newtheorem{corollary}[theorem]{Corollary}
\newtheorem{proposition}[theorem]{Proposition}
\newtheorem{lemma}[theorem]{Lemma}
\theoremstyle{definition}
\newtheorem{definition}[theorem]{Definition}
\theoremstyle{remark}
\newtheorem{remark}[theorem]{Remark}
\newtheorem{example}[theorem]{Example}
\newcommand{\superimpose}[2]{%
  {\ooalign{$#1\@firstoftwo#2$\cr\hfil$#1\@secondoftwo#2$\hfil\cr}}}
	\newcommand{\footnoteref}[1]{%
		\protected@xdef\@thefnmark{\ref{#1}}\@footnotemark%
	}
\newcommand{\customlabel}[2]{%
	\protected@write \@auxout {}{%
		\string \newlabel {#1}{{#2}{\thepage}{#2}{#1}{}}%
	}\hypertarget{#1}{#2}
}
\newcommand{\defeq}{\stackrel{\vartriangle}{=}}
\DeclareMathOperator{\img}{img}
\DeclareMathOperator{\obj}{obj}
\newcommand{\bottom}{\perp}
\newcommand{\bef}[1]{\mathfrak{#1}}
\newcommand{\cat}[1]{\ensuremath{\mathbf{#1}}\xspace}
\newcommand{\VCat}[1][\cat{V}]{\textnormal{#1-\cat{Cat}}}
\newcommand{\Set}{\cat{Set}}
\newcommand{\Pos}{\cat{Pos}}
\newcommand{\Cpo}{\cat{Cpo}}
\newcommand{\Cpob}{\cat{Cpo_{\mspace{-2mu}\bottom\mspace{-2mu}}}}
\newcommand{\Kl}{\mathcal{K}\kern-\nulldelimiterspace{l}}
\newcommand{\Ch}{\mathcal{C}\kern-\nulldelimiterspace{h}}
\newcommand{\LCh}{\overline{\mathcal{C}\kern-\nulldelimiterspace{h}}}
\newcommand{\FCat}[2]{\ensuremath{[#1,#2]}}
\newcommand{\Endo}[1]{\FCat{#1}{#1}}
\newcommand{\EndoFact}[2]{\ensuremath{[#1,#2,#1]}}
\newcommand{\underlying}[1]{\lfloor#1\rfloor}
\newcommand{\lifted}[1]{\ensuremath{\cat{Lift}_{#1}}}
\newcommand{\dropped}[1]{\ensuremath{\cat{Drop}_{#1}}}
\newcommand{\Id}{I\mspace{-2mu}d}
\newcommand{\id}{i\mspace{-2mu}d}
\newcommand{\inj}{i\mspace{-2mu}n}
\newcommand{\op}{^{o\mspace{-2mu}p}}
\title{\fontsize{18pt}{10pt}\selectfont\textbf{\sc
	Endofunctors Modelling Higher-Order Behaviours
}}
\author{
\large\textsc{Marco Peressotti}\\[2pt]
\normalsize Laboratory of Models and Applications of Distributed Systems,\\[-2pt]
\normalsize	Department of Mathematics, Computer Science and Physics\\[-2pt]
\normalsize University of Udine, Italy\\
\normalsize \href{marco.peressotti@uniud.it}{\tt marco.peressotti@uniud.it}
\vspace{-2mm}
}
\date{}
\begin{document}

\maketitle

\thispagestyle{fancy} 


\begin{abstract}
	In this paper we show how the abstract behaviours of higher-order systems can be modelled as \emph{final coalgebras} of suitable \emph{behavioural functors}. These functors have the challenging peculiarity to be circularly defined with their own final coalgebra. Our main contribution is a general construction for defining these functors, solving this circularity which is the essence of higher-order behaviours. This characterisation is \emph{syntax agnostic}. To achieve this property, we shift from term passing to \emph{behaviour passing}: in the former higher-order is expressed by passing around syntactic objects--such as terms or processes--as representations of behaviours whereas the former ditches the syntactic encoding altogether and works directly with behaviours \ie semantic objects. From this perspective, the former can be seen as syntactic higher-order whereas the latter as \emph{semantic higher-order}.
\end{abstract}


\section{Introduction}
\label{sec:introduction}

It is well known that \emph{higher-order systems}, \ie systems which can pass
around systems of the same kind, like the $\lambda$-calculus \cite{baren:lcalc,
abramsky:rtfp1990}, the calculus of higher-order communicating systems (CHOCS)
\cite{thomsen:acta1993}, the higher-order $\pi$-calculus (HO$\pi$) 
\cite{sangiorgi:ic1996}, HOcore \cite{lpss:lics2008}, \etc, are
difficult to reason about. Many bisimulations and proof methods have been
proposed also in recent works \cite{sks:lics2007,sl:sas2009,
kls:mfps2011,lsjb:ic2011,bmss:lmcs2012,sumii:lics12}. This effort points out
that a definition of \emph{abstract} higher-order behaviour is still elusive.
In this paper, we show how these abstract behaviours can be modelled as the
\emph{final coalgebras} of suitable \emph{higher-order behavioural functors}.

\looseness=-1
Coalgebras are a well established framework for modelling and studying concurrent and reactive systems \cite{rutten:tcs2000}.  In this approach, we first define a \emph{behavioural endofunctor} $B$ over $\Set$ (or other suitable category), modelling the computational aspect under scrutiny; for $X$ a set of states, $BX$ is the type of behaviours over $X$.  Then, a system over $X$ corresponds to a $B$-coalgebra, \ie a map $h\colon X\to BX$ associating each state with its behaviour.  The crucial step of this approach is defining the functor $B$, as it corresponds to specify the behaviours that the systems are meant to exhibit.  Once we have defined a behavioural functor, many important properties and general results can be readily instantiated, such as the existence of the \emph{final $B$-coalgebra} (containing all abstract behaviours), the definition of the canonical \emph{coalgebraic bisimulation} (which is the abstract generalization of Milner's strong bisimilarity) and its coincidence with behavioural equivalence \cite{am:ctcs89}, the construction of canonical \emph{trace semantics} \cite{hjs:lmcs2007} and \emph{weak bisimulations} \cite{bmp:jlamp2015}, the notion of \emph{abstract GSOS} \cite{tp:lics1997,klin:tcs2011}, \etc.  We stress the fact that behavioural functors are ``syntax agnostic'': they define the semantic behaviours, abstracting from any specific concrete representation of systems. In the wake of these important results, many functors have been defined for modelling a wide range of behaviours: deterministic and non-deterministic systems \cite{rutten:tcs2000}; systems with I/O, with names, with resources \cite{ft:lics2001,fs:ic2006}; systems with quantitative aspects such as probabilities \cite{sokolova:tcs2011,doberkat:scl,ks:ic2013,mp:tcs2016,mp:qapl2014}; systems with continuous states \cite{bm:jcss2015}, \etc.

Despite these results, a general coalgebraic treatment of higher-order systems
is still missing.  In fact, defining endofunctors for higher-order behaviours
is challenging.  In order to describe the problem, let us consider first an endofunctor over \Set for representing the behaviour of a first-order calculus, like CCS with
value passing \cite{hi:ic1993}:
\begin{equation}
   B = \mathcal{P}_{\!\!\omega}(C\times V\times \Id + C\times \Id^V + \Id)
   \label{eq:eg-firstorder-beh}
\end{equation}
where $C$ is a set of channels and $V$ is the set of values
\cite{ft:lics2001}. This functor is well-defined, and it admits a final
coalgebra which we denote by $\nu B$; the carrier of this coalgebra is the set
$|\nu B|$ containing \emph{all possible (abstract) behaviours}, 
\ie,  synchronization trees labelled with nothing ($\tau$-actions), 
input or output actions.

In a higher-order calculus like HO$\pi$, the values that processes can communicate are processes themselves. However, actions communicating semantically equivalent (herein, strongly bisimilar) processes have to be considered equivalent even if the values/processes exchanged are syntactically different. In other words, this means that from the semantics perspective higher-order behaviours communicate \emph{behaviours}. To reflect this fundamental observation in the definition \eqref{eq:eg-firstorder-beh} we must replace the set of exchanged values $V$ with the set of all possible behaviours \ie the carrier of the final $B$-coalgebra yielding the following definition:
\begin{equation}
   B_{ho} = \mathcal{P}_{\!\!\omega}(C\times |\nu B_{ho}| \times \Id + 
   C\times \Id^{|\nu B_{ho}|} + \Id)
   \label{eq:eg-higherorder-beh}
\end{equation}
But this means that we are defining $B_{ho}$ using its own final coalgebra $\nu B_{ho}$, which can be defined (if it exists) only after $B_{ho}$ is defined---a circularity!

\looseness=-1
We think that this circularity is the gist of higher-order behaviours: any attempt to escape it would be restricting and distorting.  One may be tempted to take as $V$ some (syntactic) representation of behaviours (\eg, processes), but this would fall short.  First, the resulting behaviours would not be really higher-order, but rather behaviours manipulating some \adhoc representation of behaviours. Secondly, we would need some mechanism for moving between behaviours and their representations--which would hardly be complete. Third, the resulting functor would not be abstract and independent from the syntax of processes, thus hindering the possibility of reasoning about the computational aspect on its own, and comparing different models sharing the same kind of behaviour.

This fundamental shift from \emph{term/process passing} to \emph{behaviour passing} is at the hearth of this work and introduces a distinction between syntactic and semantic higher-order. Intuitively, for fully-abstract calculi this means that instead of restricting to systems unable to distinguish between bisimilar processes, we exchange collections of bisimilar processes/terms. This frees us from the hurdle of finding complete syntactic representations and canonical representatives. To some extent, this approach follows the idea of Sangiorgi's \emph{environmental bisimulation} \cite{sks:lics2007} where bisimulation relations are indexed by approximations of the bisimilarity relation used in place of syntactic equivalence during the bisimulation game.

The main contribution of this paper is a general characterisation endofunctors modelling higher-order operational behaviours in terms of solutions to certain recursive equations. We complement this result providing a categorical construction for finding solutions to these equations while dealing with the unavoidable circularity mentioned above. The key idea is to consider the definition as an instance of an endofunctor $\bef{F}(V)\colon \Set\to\Set$ parameterised in the object of values $V$. Then, we are interested in those instances whose final coalgebra is carried by the object of values \ie those such that values are exactly all (abstract) behaviours. Actually, since this parameter may occur in both covariant and contravariant position, the functor is \emph{biparametric}.  In our example, $\bef{F}\colon\Set^{op}\times\Set \to \Endo{\Set}$ is given as:
\begin{equation}
   \bef{F}(X,Y) = \mathcal{P}_{\!\!\omega}(C\times Y \times \Id + C\times \Id^X + \Id)
   \label{eq:eg-parametric-beh}
\end{equation}
where $\Endo{\Set}$ denotes the category of endofunctors over \Set. In general, we consider functors $\bef{F}\colon \cat{C}^{op}\times\cat{C} \to \Endo{\cat{C}}$ (eventually $\bef{F}\colon \cat{D}^{op}\times\cat{D} \to \Endo{\cat{C}}$) and show how to define an initial sequence of endofunctors together with their final coalgebras such that its limit $(B, |\nu B|)$ exists and satisfies:
\[
	B \cong \bef{F}(|\nu B|,|\nu B|).
\]
Thus, $B$ is the requested higher-order behavioural functor.

A consequence of this construction is that we can now apply standard results and techniques offered by the coalgebraic framework and \eg, derive a canonical higher-order bisimulation as an instance of Aczel-Mendler's coalgebraic bisimulation. Likewise, we can derive SOS specifications by instantiating the bialgebraic framework thus accommodating syntactic representations (\eg processes) in our settings.

The remainder of this paper is structured as follows: \cref{sec:algebraic-compactness} contains preliminaries on order-enriched categories and algebraic compactness. \Cref{sec:hob} provides an abstract characterisation of endofunctors modelling higher-order behaviours and a construction for computing them levering algebraic compactness. \Cref{sec:hob-generalised} refines and generalises this construction with the relevant (albeit technical) benefit of reducing the parts where algebraic compactness is assumed. \Cref{sec:conclusions} contains some concluding remarks and directions for further work.

\section{Preliminaries on algebraic compactness}
\label{sec:algebraic-compactness}

In this section we recall preliminary notions and results relevant relevant to the constructions described in the following sections. In particular, we need a form of \emph{limit-colimit coincidence} \cite{scott:lnm1972} developed in the field of (categorical) domain theory in order to guarantee the existence of (unique dominating) solutions to equations with unknowns occurring in both covariant and contravariant position.

Let \Cpo be the category whose objects are (small) $\omega$-complete partial orders and whose morphisms are continuous maps and let \Cpob be its subcategory whose objects have bottoms and whose morphisms are bottom-strict.

A \emph{\Cpo-enriched category} (or simply \Cpo-category) \cat{C} is a locally small category whose hom-sets $\cat{C}(X,Y)$ come equipped with an $\omega$-complete partial order $\leq_{X,Y}$ such that composition $(-\circ-)\colon \cat{C}(Y,Z)\times\cat{C}(X,Y) \to \cat{C}(X,Z)$ is a continuous operation. A special case of \Cpo-categories are those enriched over \Cpob \ie any \Cpo-category \cat{C} whose hom-sets $\cat{C}(X,Y)$ are additionally equipped with a bottom element $\bottom_{X,Y}$ and whose composition operation is strict. We shall drop subscripts from $\leq_{X,Y}$ and $\bottom_{X,Y}$ when possible. We denote the category underlying a \Cpo-category \cat{C} as $\lfloor\cat{C}\rfloor$ or just as \cat{C} when clear from the context.

In the following let \cat{V} stand for either \Cpo or \Cpob.

\begin{example}
	The category \cat{V} is enriched over itself. The single object category \cat{1} is trivially \Cpob-enriched. The dual of a \cat{V}-category \cat{C} is the \cat{V}-category $\cat{C}\op$ such that $\obj(\cat{C}\op) = \obj(\cat{C})$ and $\cat{C}\op(X,Y) = \cat{C}(Y,X)$. The product of \cat{V}-categories \cat{C} and \cat{D} is the \cat{V}-category $\cat{C}\times\cat{D}$ such that $(\cat{C}\times\cat{D})((X,X'),(Y,Y')) = \cat{C}(X,Y)\times\cat{D}(X',Y')$. The category of relations $\cat{Rel} \cong \Kl(\mathcal{P})$ is a \Cpob-category where the order structure is defined by pointwise extension of the inclusion order created by the powerset monad---see \eg \cite{hjs:lmcs2007,bmp:jlamp2015} for more behavioural functors of endowed with monadic structures yielding \cat{V}-enriched Kleisli categories.
\end{example}

A \emph{\cat{V}-enriched functor} $F\colon \cat{C} \to \cat{D}$ between \cat{V}-categories is a functorial mapping such that for every pair of objects $X$, $Y$ in $\cat{C}$ the assignment $F_{X,Y}\colon \cat{C}(X,Y) \to \cat{D}(FX,FY)$ is continuous and, in the case of \Cpob-functors, strict. Unless otherwise stated, we implicitly assume that domain and codomain of an enriched functor are similarly enriched. We shall denote the functor underlying a \cat{V}-functor $F\colon \cat{C} \to \cat{D}$ as $\lfloor F \rfloor\colon \lfloor\cat{C}\rfloor \to \lfloor\cat{D}\rfloor$ or simply $F$, when confusion seems unlikely. For \cat{V}-categories \cat{C} and \cat{D}, the functor category
$[\cat{C},\cat{D}]$ is the \cat{V}-category whose objects are \cat{V}-functors and such that $[\cat{C},\cat{D}](X,Y)$ is the complete partial order on the set $\cat{Nat}(X,Y)$ of natural transformations given by pointwise extension of the order on their components. Categories and functors enriched over \cat{V} form the category \VCat.
A \emph{\Cpo-adjunction} $\chi\colon L\vdash R\colon \cat{C} \to \cat{D}$ is given by a natural isomorphism 
\[
	\chi\colon \cat{C}(L-,-) \cong \cat{D}(-,R-)\colon \cat{D}\op\times\cat{C} \to \Pos
	\text{.}
\]
Actually, the above statement defines a \Pos-adjunction but, since the inclusion functor $\Cpo \to \Pos$ creates isomorphisms, any \Pos-adjunction involving \Cpo-categories yields a \Cpo-adjunction.

Two morphisms $e\colon X \to Y$ and $p\colon Y \to X$ in a \Cpo-category \cat{C} form an \emph{embedding-projection pair} (written $e \lhd p \colon X \to Y$) whenever $p\circ e = \id_X$ and $e\circ p \leq \id_Y$ or, diagrammatically:
\[\begin{tikzpicture}[font=\footnotesize,auto,
	xscale=1, yscale=1,
	baseline=(current bounding box.center),
	]
	\node (n0) at (0,1) {\(X\)};
	\node (n1) at (1,1) {\(Y\)};
	\node (n2) at (1,0) {\(X\)};
	\node (n3) at (2,0) {\(Y\)}; 

	\draw[>->] (n0) to node {\(e\)} (n1);
	\draw[->>] (n1) to node[swap] {\(p\)} (n2);
	\draw[>->] (n2) to node[swap] {\(e\)} (n3);
	\draw[->,bend right] (n0) to node[swap] {\(\id_X\)} (n2);
	\draw[->,bend left] (n1) to node[] {\(\id_Y\)} (n3);
	
	\node[rotate=45] at ($(n1)!.5!(n3)!.2!(n2)$)  {\(\leq\)};
\end{tikzpicture}\]
The components $e$ and $p$ are called \emph{embedding} (of $X$ in $Y$) and \emph{projection} of ($Y$ in $X$), respectively, and uniquely determine each other. Since complete partial orders are small categories, embedding-projection pairs are \emph{coreflections};\footnote{A coreflection is an adjunction whose unit is a pseudo-cell.} henceforth we use the two terms interchangeably in the context of \Cpo-categories.

Coreflections in a \Cpo-category \cat{C} form a sub-\Cpo-category of \cat{C} whose objects are those of \cat{C} and whose arrows are embedding-projections with the order on hom-sets given by the ordering on the embeddings (note that $e \leq e' \iff p \geq p'$). We shall write $\cat{C}^\lhd$ for such category.  By forgetting either the projection or embedding  part of a coreflection we get the categories $\cat{C}^e$ and $\cat{C}^p$ (of embeddings and projections), respectively, and such that $\cat{C}^e \cong \cat{C}^\lhd \cong (\cat{C}^p)\op$.

\begin{proposition}[Limit-colimit coincidence]
	Assume \cat{C} enriched over the category \Cpo. For an $\omega$-chain of coreflections $(e_n \lhd p_n\colon X_n \to X_{n+1})_{n < \omega}$ and a cone of coreflections $(f_n \lhd q_n\colon X \to X_{n})_{n < \omega}$ for it the following are equivalent:
	\begin{enumerate}
		\item 
			the cocone $(f_n\colon X \to X_{n})_{n < \omega}$ is a colimit for
			the embeddings chain $(e_n\colon X_n \to X_{n+1})_{n < \omega}$.
		\item
			the cone $(q_n\colon X_n \to X)_{n < \omega}$ is a limit for
			the projections chain $(p_n\colon X_{n+1} \to X_{n})_{n < \omega}$.
	\end{enumerate}
\end{proposition}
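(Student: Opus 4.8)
The plan is to route both clauses through a single \emph{coincidence criterion}. From the $n$-th embedding--projection pair at the apex I would extract an idempotent $\delta_n\colon X\to X$ sitting below $\id_X$, observe that $(\delta_n)_{n<\omega}$ is an increasing $\omega$-chain, and then show that each of the two clauses holds if and only if $\bigsqcup_{n<\omega}\delta_n = \id_X$; transitivity yields the stated equivalence.

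First I would record the identities in play: the coreflection laws $p_n\circ e_n = \id_{X_n}$ and $e_n\circ p_n\leq\id_{X_{n+1}}$ for the chain, the analogous laws for the apex coreflections $f_n\lhd q_n$, and the compatibility equations expressing that $(f_n\lhd q_n)$ is a cone of coreflections over $(e_n\lhd p_n)$. Each $\delta_n$ is idempotent by the coreflection laws and bounded by $\id_X$, while combining $e_n\circ p_n\leq\id$ with compatibility shows that the $\delta_n$ increase; since the hom-poset $\cat{C}(X,X)$ is $\omega$-complete, the supremum $\bigsqcup_n\delta_n$ exists (only $\omega$-completeness is used here, not the existence of bottoms). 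This supremum is the object of the criterion.

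Next I would prove clause (1) $\iff$ criterion. In one direction I would apply the universal property of the colimit to the apex itself: precomposing both $\bigsqcup_n\delta_n$ and $\id_X$ with each colimit leg $f_n$ produces the same arrow, by a telescoping computation that uses $\Cpo$-continuity of composition to pull the join through, so uniqueness of the mediating arrow forces $\bigsqcup_n\delta_n = \id_X$. Conversely, assuming the criterion, for an arbitrary competing cocone I would define the mediating arrow out of $X$ as the join of the composites of its legs with the projections $q_n$; $\omega$-completeness makes this join well defined, the compatibility equations make it factor the cocone, and the criterion is exactly what forces the factorisation to be unique. Clause (2) $\iff$ criterion I would then obtain by the formally dual argument, building the mediating arrow \emph{into} $X$ from a competing cone together with the embeddings $f_n$, with the roles of $e_n$ and $p_n$ interchanged.

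The main obstacle is the contravariance built into coreflections: embeddings and projections point in opposite directions, so neither clause can be treated in isolation and one must repeatedly convert a cocone of embeddings into a cone of projections through the coreflection laws. The genuine technical heart is the interaction of suprema with composition---$\Cpo$-enrichment is indispensable here, since it is what permits the join-defined arrows to be manipulated like ordinary composites---together with the observation that the single equation $\bigsqcup_n\delta_n = \id_X$ is precisely the hinge upgrading a bare factorisation into a full universal property on both sides.
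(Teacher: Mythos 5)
Your proposal is correct: the idempotents $\delta_n$ cut out of the apex coreflections, the monotonicity of the chain $(\delta_n)_{n<\omega}$ via the compatibility equations, and the hinge condition $\bigsqcup_{n<\omega}\delta_n=\id_X$ characterising both universal properties is precisely the classical Smyth--Plotkin ``O-colimit'' argument from \cite{ps:siam1982}, which is exactly the source the paper defers to (it states the proposition without proof, as a reformulation of that result). So you have in effect supplied the proof the paper omits, by the same route it implicitly intends.
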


The above is a slight reformulation of the limit-colimit coincidence result used in \cite{ps:siam1982} to solve recursive domain equations with unknowns occurring in covariant and contravariant positions like the well-known domain equation:
\[
	D \cong (D \to D) + At
	\text{.}
\]

In \cite{freyd:ct1991} Peter J.~Freyd introduced the concept of \emph{algebraically compact} categories as an abstract context where to address mixed variance. These categories are characterised by a limit-colimit coincidence property for initial/final sequences of endofunctors in a given class \eg \Cpo-endofunctors as in the definition below.
\begin{definition}
	A \Cpo-category is \emph{\Cpo-algebraically complete} \cite{freyd:ct1991} whenever every \Cpo-endofunctor on it has an initial algebra. Dually, a \Cpo-category is \Cpo-coalgebraically cocomplete whenever every \Cpo-endofunctor on it has a final coalgebra. A \Cpo-category (resp.~\Cpob-category) is \emph{\Cpo-algebraically compact} \cite{freyd:ct1991} if every \Cpo-functor has an initial algebra and a 	final coalgebra and they are canonically isomorphic.
\end{definition}

\begin{proposition}
	\label{thm:cpob-cpo-compactness}
	The following hold true:
	\begin{enumerate}
		\item
			\cite{fiore:domainbook}
			A \Cpo-category with an embedding-initial object and colimits of $\omega$-chains of embeddings is \Cpo-algebraically complete.
		\item 
			\cite{freyd:lmms1992}
			A \Cpo-algebraically complete \Cpob-enriched category is \Cpo-algebraically compact.			
	\end{enumerate}
\end{proposition}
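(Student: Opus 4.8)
The plan is to prove the two parts separately: the first produces initial algebras from the initial-chain construction, and the second upgrades them to bifree objects using the extra structure of \Cpob.

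For part (1), fix a \Cpo-functor $F$ on $\cat{C}$. The first observation I would record is that every \Cpo-functor preserves embedding-projection pairs: if $e \lhd p$, then $Fp \circ Fe = F(p \circ e) = F\id = \id$ while $Fe \circ Fp = F(e \circ p) \leq F\id = \id$, the inequality coming from local monotonicity of $F$; hence $F$ restricts to an endofunctor on $\cat{C}^\lhd$, equivalently on $\cat{C}^e$. Since $0$ is embedding-initial there is a unique embedding $e_0 \colon 0 \monoto F0$, and setting $e_{n+1} = Fe_n$ yields the initial $\omega$-chain of embeddings $(e_n \colon F^n 0 \monoto F^{n+1}0)_{n<\omega}$, whose colimit $(D, f_n)$ exists by hypothesis. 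By the limit-colimit coincidence (the preceding proposition), the embedding cocone $(f_n)$ carries a projection cone $(q_n)$ with the characterising identity $\bigsqcup_n f_n \circ q_n = \id_D$.

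The crux of part (1) is that $F$ preserves this colimit. Applying $F$ gives the embedding cocone $(FD, Ff_n)$ over $(Fe_n)$ with projections $Fq_n$ (using embedding preservation), and local continuity yields $\bigsqcup_n Ff_n \circ Fq_n = F\bigl(\bigsqcup_n f_n \circ q_n\bigr) = F\id_D = \id_{FD}$; by the characterisation of colimiting cocones of embeddings this identity exhibits $(FD, Ff_n)$ as a colimit of $(Fe_n)$. As $(Fe_n)$ is merely the shift of the original chain, its colimit is again $D$, so we obtain a canonical iso $\alpha \colon FD \xrightarrow{\cong} D$. That $(D,\alpha)$ is \emph{initial} among all $F$-algebras is then the usual Ad\'amek/Smyth--Plotkin argument: any algebra $(A,a)$ receives a cocone $h_n \colon F^n 0 \to A$, with $h_{n+1} = a \circ Fh_n$, inducing a mediating algebra morphism out of the colimit, unique by the universal property. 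I expect this last step to be where the real care is needed and where I would lean on Fiore's detailed treatment, since it is the enrichment---rather than plain initiality of $0$ in $\cat{C}$---that forces the comparison morphism.

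For part (2), assume moreover that $\cat{C}$ is \Cpob-enriched and, by part (1), \Cpo-algebraically complete, so each \Cpo-functor $F$ has an initial algebra $\alpha \colon FA \to A$, an isomorphism by Lambek's lemma; I must show $\alpha^{-1} \colon A \to FA$ is a \emph{final} coalgebra, with the structure maps mutually inverse giving the canonical iso required for compactness. Here I would exploit the bottom morphisms $\bottom_{X,Y}$ and the strictness of composition supplied by \Cpob. Existence of a coalgebra morphism out of any $(B,\beta\colon B \to FB)$ is routine: the operator $\Phi(h) = \alpha \circ Fh \circ \beta$ on the pointed cpo $\cat{C}(B,A)$ is continuous by local continuity, so its least fixed point $\bigsqcup_n \Phi^n(\bottom)$ is a coalgebra morphism, and conversely every coalgebra morphism is a fixed point of $\Phi$. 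The hard part---Freyd's key insight and the main obstacle of the whole proposition---is \emph{uniqueness}, i.e.\ that $\Phi$ admits a single fixed point. This is exactly where strictness of the \Cpob-enrichment is indispensable: it is what makes the initial algebra bifree, collapsing any competing fixed point onto the least one, and it is the content I would import from Freyd's theorem rather than re-derive.
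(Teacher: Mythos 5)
The paper does not actually prove this proposition: both parts are imported directly from the cited sources (Fiore's book for (1), Freyd's \emph{Remarks on algebraically compact categories} for (2)), so there is no in-paper argument to compare against. Your sketch correctly reproduces the standard proofs behind those citations --- locally monotone functors preserve embedding-projection pairs, the initial chain of embeddings has a colimit characterised by $\bigsqcup_n f_n \circ q_n = \id$, local continuity makes $F$ preserve it, and the coalgebra morphism into the initial algebra is the least fixed point of $\Phi(h) = \alpha \circ Fh \circ \beta$ on the pointed hom-cpo. The two steps you defer are precisely the content of the citations: initiality of the colimit algebra uses the projections to build the mediating map as $\bigsqcup_n h_n \circ q_n$, and Freyd's uniqueness argument shows that the least fixed point of $k \mapsto \alpha \circ Fk \circ \alpha^{-1}$ is an algebra endomorphism of the initial algebra, hence $\id_A$, and then uses strictness and continuity of composition to identify every fixed point of $\Phi$ with the least one. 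No gaps beyond what you have explicitly flagged.
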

The class of \Cpo-algebraically compact categories is closed under products and duals.
\begin{corollary}
	\label{cor:compact-product-dual}
	Assume \cat{C} and \cat{D} \Cpo-algebraically compact,
	$\cat{C}\op$ and $\cat{C}\times\cat{D}$ are \Cpo-algebraically compact.
\end{corollary}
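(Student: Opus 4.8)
The plan is to treat the two closure properties separately, beginning with the dual since it is essentially formal and exhibits the symmetry that algebraic compactness builds in. The key observation is that a \Cpo-endofunctor $F$ on $\cat{C}\op$ carries exactly the same data as a \Cpo-endofunctor $F\op$ on \cat{C}: because $\cat{C}\op(X,Y)=\cat{C}(Y,X)$ bears the very same order, continuity of the hom-maps (and strictness, in the \Cpob case) is preserved under transposition, and $(\cat{C}\op)\op=\cat{C}$. Under this correspondence an $F$-algebra in $\cat{C}\op$ is precisely an $F\op$-coalgebra in \cat{C}, and an $F$-coalgebra in $\cat{C}\op$ is precisely an $F\op$-algebra in \cat{C}, with initiality and finality swapped. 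I would then invoke the assumed compactness of \cat{C}: its canonically isomorphic initial $F\op$-algebra and final $F\op$-coalgebra dualise, respectively, to a final $F$-coalgebra and an initial $F$-algebra in $\cat{C}\op$, and the canonical isomorphism transports verbatim. Hence $\cat{C}\op$ is \Cpo-algebraically compact.

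For the product I would first reduce to a parametrised form of compactness. Write an arbitrary \Cpo-endofunctor on $\cat{C}\times\cat{D}$ as $F=\langle G,H\rangle$ with $G=\pi_1\circ F\colon \cat{C}\times\cat{D}\to\cat{C}$ and $H=\pi_2\circ F\colon \cat{C}\times\cat{D}\to\cat{D}$, both \Cpo-functors. The goal is to produce a fixed point $(C^*,D^*)$ with $G(C^*,D^*)\cong C^*$ and $H(C^*,D^*)\cong D^*$ that is simultaneously the initial $F$-algebra and the final $F$-coalgebra. The natural route is a Bekić-style solution in two stages: fix the \cat{D}-parameter and solve in \cat{C} first, then feed the solution back and solve the residual equation in \cat{D}.

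Concretely, for each $D$ the functor $G(-,D)\colon \cat{C}\to\cat{C}$ is a \Cpo-endofunctor, hence by compactness of \cat{C} has canonically isomorphic initial algebra and final coalgebra; I would package the assignment $D\mapsto(\text{its carrier})$ into a \Cpo-functor $G^{\sharp}\colon \cat{D}\to\cat{C}$ together with a natural isomorphism $G(G^{\sharp}D,D)\cong G^{\sharp}D$ that is fibrewise bifree. This \emph{parametrised compactness} is the technical heart of the argument and the place where I expect the real work: functoriality and enriched continuity of $G^{\sharp}$ are not automatic from a bare universal property, and I would obtain them from the Limit-colimit coincidence, realising the bifree carrier as the colimit of the embedding-initial $\omega$-chain for $G(-,D)$ and noting that this colimit, with its mediating coreflections, depends functorially and continuously on $D$. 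With $G^{\sharp}$ in hand, $K\defeq H\circ\langle G^{\sharp},\Id_{\cat{D}}\rangle\colon \cat{D}\to\cat{D}$ is again a \Cpo-endofunctor, so compactness of \cat{D} yields its bifree carrier $D^*$ with $K D^*\cong D^*$; setting $C^*\defeq G^{\sharp}D^*$ then gives the desired mutual fixed point. The final step is to check, by the standard Bekić interchange, that an arbitrary (co)algebra structure factors uniquely through the two stages and that the pairing of the two component isomorphisms agrees, exhibiting $(C^*,D^*)$ as both initial $F$-algebra and final $F$-coalgebra. Closure under duals and products together then yields the corollary.
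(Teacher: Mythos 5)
The paper gives no proof of this corollary at all: it is stated as a known closure property and implicitly deferred to \cite{freyd:lmms1992,fiore:domainbook}, so there is no in-paper argument to compare against line by line. Your proposal is essentially the standard argument from that literature. The dual case is correct and complete: \Cpo-algebraic compactness is a self-dual notion, transposition exchanges initial algebras with final coalgebras, and the canonical isomorphism transports. For the product you correctly identify the real issue, namely that an endofunctor on $\cat{C}\times\cat{D}$ is a pair $\langle G,H\rangle$ of functors \emph{out of} the product rather than a product of endofunctors, so one must pass through a parametrised bifree-algebra functor $G^{\sharp}\colon\cat{D}\to\cat{C}$ and a Beki\'c-style interchange. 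This is exactly Freyd's reduction of products to parametrised algebraic compactness.

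The one step to repair is your plan to obtain functoriality and \Cpo-continuity of $G^{\sharp}$ from the limit-colimit coincidence by realising each bifree carrier as the colimit of the embedding-initial $\omega$-chain of $G(-,D)$. The hypothesis of the corollary is bare \Cpo-algebraic compactness of \cat{C}; it does not supply an embedding-initial object or colimits of $\omega$-chains of embeddings (those are the \emph{sufficient} conditions of \cref{thm:cpob-cpo-compactness}, not consequences of compactness), so that realisation is not available in general. Functoriality of $G^{\sharp}$ in fact comes for free from the universal property, exactly as in \cref{thm:cpo-final}: a morphism $d\colon D\to D'$ induces a natural transformation $G(-,D)\To G(-,D')$, and initiality of the bifree $G(-,D)$-algebra yields the component $G^{\sharp}d$, with functoriality by uniqueness. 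The genuinely delicate point is the \Cpo-enrichment of $G^{\sharp}$ (needed so that $K = H\circ\langle G^{\sharp},\Id_{\cat{D}}\rangle$ is again a \Cpo-endofunctor); the standard route extracts it from the characterisation of the canonical comparison maps into/out of a bifree algebra as unique fixed points of continuous endomaps of the hom-cpos, which is where Freyd's and Fiore's treatments do the actual work. With that substitution your two-stage construction and the concluding Beki\'c interchange go through, and the architecture of your proof matches the intended provenance of the corollary.
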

In particular, if \cat{C} is \Cpo-algebraically compact then so is its free involutory category $\cat{C}\op\times\cat{C}$.

\begin{remark}
	Algebraic compactness is at the core of several works on categorical domain theory, especially by Marcelo Fiore \cite{fiore:domainbook} who refined and extended the theory. We restricted ourselves to \Cpo-algebraic compactness in order to simplify the exposition but results presented in this work can be formulated	in the general setting of pseudo-algebraically compact 2-categories \cite{cfw:lics1998}.
\end{remark}

\section{Higher-order operational behaviours via behaviour passing}
\label{sec:hob}

In this Section we present the main contribution of our paper: a coalgebraic characterisation of higher-order behaviours. In \cref{sec:hob-characterisation} we review the process passing approach and discuss its shortcomings when it is applied in the context of coalgebras. This lead us to propose the behaviour passing approach which yield an abstract and syntax-agnostic notion of functors that model higher-order behaviours. We characterise these higher-order behavioural endofunctors as solutions to suitable recursive equations. In \cref{sec:hob-solutions} we study the problem of finding such solutions: the key idea is to obtain them as limits of specific sequences.  Fundamental to this result is limit-colimit coincidence result recalled in \cref{sec:algebraic-compactness} which thus prompts us to work in \Cpo-enriched settings. Solutions obtained in this way define higher-order behavioural functors back in the ``non-enriched'' settings. In \cref{sec:hob-examples} we discuss some illustrative examples.

\subsection{Characterising higher-order behaviours}
\label{sec:hob-characterisation}

We abstract a family of endofunctors modelling value passing behaviours such as
\eqref{eq:eg-firstorder-beh} as a functor:
\[
	\bef{F}\colon \cat{C}\op\times\cat{C} \to \Endo{\cat{C}}
	\text{.}
\]
Herein we shall refer to functors of this type as 
\emph{behaviour families} and say that a behavioural endofunctor
$B$ \emph{belongs} to a family $\bef{F}$ whenever 
$B \cong \bef{F}(V,W)$ for some $V, W\in \cat{C}$.

\begin{remark}
	The above definition could (and will) be generalised
	to allow parameters to range over categories different
	than the category \cat{C} over which endofunctors in
	the family are defined. In the context this work,
	the case of functors of type 
	$\cat{D}\op\times\cat{D} \to \Endo{\cat{C}}$ 
	requires some additional technical attention and therefore
	it will be covered separately in \cref{sec:hob-generalised}
\end{remark}

For the sake of simplicity, let us consider as a running example the functor $\bef{F}(V,W) = \Id^V + W$ over \Set. A system modelled by a coalgebra for $\Id^V + W$ can, at each step, either input an element of $V$ or terminate producing an element of $W$. Fix a set $P$ of processes\footnote{The core ideas of the following reasoning would be unaffected if an algebra of terms were to be used.}. Coalgebras of type $\bef{F}(P,P)$ model systems whose inputs and outputs are processes and the corresponding notion coalgebraic bisimulation is instantiates as follows:
\begin{definition}
	\label{def:bisim-val}
	For a pair of $(\Id^P + P)$-coalgebras, a relation $R$ between their carrier sets is a bisimulation if, and only if, $(x,y) \in R$ implies that:
	\begin{itemize}
		\item
			if $x \xrightarrow{p} x'$ then $y \xrightarrow{p} y'$ and $(x',y') \in R$;
		\item
			if $x \xrightarrow{} p$ then $y \xrightarrow{} p$;
		\item
			the symmetric of the above conditions.
	\end{itemize}
\end{definition}

Endofunctors in a $\bef{F}$ model behaviours passing values from given sets without assuming any additional structure (\eg a semantic equivalence) and consequently what were meant to be processes are indeed plain values in the above definition and, likewise, any coalgebraic construction instantiated on these functors.
To exemplify this issue consider the coalgebra $h\colon P \to P^P + P$ given by $h(p) = \inj_r(p)$ where $\inj_r$ denotes the coproduct right injection. Although $h$ is carried by the set of processes $P$, we do not require it to coincide with the dynamics assumed for the processes in $P$. The coalgebra $h$ describes processes that promptly terminate returning themselves as their output value---a rather limited dynamics---but, since the outputs are distinct elements of $P$, the greatest bisimulation is the identity relation: Assume $R \subseteq P \times P$, if $(x,y) \in R$ then $h(x) = \inj_r(x)$ is equal to $h(y) = \inj_r(y)$ and hence $x = y$. 
Coalgebras of type $\bef{F}(P,P)$ describe systems that can react to inputs regardless of semantics equivalence: they can distinguish processes meant to be behaviourally indistinguishable. For instance, consider a context $C[-]$ such that $C[p] \to p$ for any $p \in P$, if $r \neq s$ then $r \sim s \centernot\implies C[r] \sim C[s]$ since $C[r] \to r$ and $C[s] \to s$.
From these examples it is clear that \cref{def:bisim-val} does not capture the intuitive semantics equivalence, unless the later is the identity on $P$, which is not in many higher-order calculi.

Approaches found in the literature revolve around two main strategies. 
The first is to restrict to the subclass of coalgebras that do not distinguish inputs and outputs that are meant to be semantically equivalent. Unfortunately, this hinders many of the valuable results offered by the coalgebraic approach: for starters, the final coalgebra is not among them. In fact, the final coalgebra for $\Id^C + D$ is the set of all possibly unlimited $C$-branching trees whose leaves are labelled with elements in $D$. We mention \cite{hl:tlca1995} as an example of this approach. 
The second is to provide an \adhoc definition of bisimulation where values are compared using a suitable equivalence relation. This is the idea behind several process-passing calculi such as CHOCS \cite{thomsen:acta1993} or HO$\pi$ \cite{sangiorgi:ic1996} and--albeit mixed with other techniques--constitute the core of environmental bisimulation \cite{sks:lics2007}. Guided by these examples, \cref{def:bisim-val} can be adapted to consider values up-to some given equivalence $\approx$ as follows:
\begin{definition}
	\label{def:bisim-val-dimmed}
	Let ${\approx} \subseteq P \times P$ be an equivalence relation. A relation $R$ between the carrier sets of two $\bef{F}(P,P)$-coalgebras is a $\approx$-bisimulation if, and only if, $(x,y) \in R$ implies that:
	\begin{itemize}
	\item
		if $x \xrightarrow{p} x'$ and $p \approx q$
		then
		$y \xrightarrow{q} y'$ and $(x',y') \in R$
	\item
		if $x \xrightarrow{} p$ then
		$y \xrightarrow{} q$ and $p \approx q$;
	\item
		the symmetric of the above conditions.
	\end{itemize}
\end{definition}

We remark that \cref{def:bisim-val-dimmed} applies to any equivalence relation over $P$ regardless of its coherence with respect to process semantics. To this end, we need to assume $\approx$ to be the greatest $\approx$-bisimulation for processes under the chosen dynamics. Reworded, the definition of processes, their dynamics, and bisimulation depend on each other.

Assuming a suitable $\approx$ exists, $\approx$-bisimulation does not coincide with coalgebraic bisimulation for $\bef{F}(P,P)$-coalgebras unless we restrict to those that can be seen as $\bef{F}(P/{\approx},P/{\approx})$-coalgebras \ie systems that ``handle bisimilar processes in bisimilar ways''.
\begin{lemma}
	\label{thm:bisim-val-dimmed}
	A relation $R$ is a $\approx$-bisimulation if, and only if,
	it is an $\bef{F}(P/{\approx},P/{\approx})$-bisimulation.
\end{lemma}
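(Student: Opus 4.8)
The plan is to reduce both notions to pointwise transition conditions and then match them clause by clause, with the quotient $[\cdot]\colon P\to P/{\approx}$ doing all the work. I first fix the two $\bef{F}(P,P)$-coalgebras, say $(X,h)$ and $(Y,k)$, and spell out the standing hypothesis that they ``can be seen as'' $\bef{F}(P/{\approx},P/{\approx})$-coalgebras. Because $\bef{F}(V,W)=\Id^V+W$ is contravariant in the input parameter and covariant in the output one, this condition is asymmetric: on outputs the quotient acts directly, $\inj_r(p)\mapsto\inj_r([p])$, whereas on inputs I must require that whenever $h(x)=\inj_l(f)$ the continuation $f\colon P\to X$ is constant on $\approx$-classes, so that it factors as $f=\bar f\circ[\cdot]$ for a unique $\bar f\colon P/{\approx}\to X$. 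Setting $\bar h(x)=\inj_l(\bar f)$ in this case and $\bar h(x)=\inj_r([p])$ in the output case produces the associated $\bef{F}(P/{\approx},P/{\approx})$-coalgebra $\bar h$ (and likewise $\bar k$); it is exactly this $\approx$-invariance of continuations that will absorb the up-to-$\approx$ slack in \cref{def:bisim-val-dimmed}.

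Next I record that, since $\bef{F}(P/{\approx},P/{\approx})$ is a polynomial \Set-endofunctor, its Aczel--Mendler bisimulation is the pointwise notion of \cref{def:bisim-val} with $P$ replaced by $P/{\approx}$: $(x,y)\in R$ must satisfy that $x\xrightarrow{[p]}x'$ forces some $y\xrightarrow{[p]}y'$ with $(x',y')\in R$, that an output $x\to[p]$ forces $y\to[p]$, together with the symmetric clauses. Unfolding through $\bar h,\bar k$, the transition $x\xrightarrow{[p]}x'$ means $h(x)=\inj_l(f)$ with $\bar f([p])=f(p)=x'$, and $x\to[p]$ means $h(x)=\inj_r(p')$ with $p'\approx p$.

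The heart of the argument is then the clause-by-clause comparison. For outputs the two conditions coincide verbatim: $h(x)=\inj_r(p)\Rightarrow k(y)=\inj_r(q)$ with $p\approx q$ is exactly the statement that the output class of $x$ equals that of $y$. For inputs, the quotient clause is the instance $q=p$ of the clause in \cref{def:bisim-val-dimmed}, giving one direction at once; for the converse I use that both continuations are $\approx$-invariant, so $g(q)=g(p)$ whenever $q\approx p$, whence firing $y$ on any label $q\approx p$ reaches the same state and the extra quantifier over $q$ in \cref{def:bisim-val-dimmed} is redundant. The symmetric clauses are handled identically by swapping $x$ and $y$. Assembling these equivalences yields that $R$ is a $\approx$-bisimulation precisely when it is an $\bef{F}(P/{\approx},P/{\approx})$-bisimulation.

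I expect the main obstacle to be conceptual rather than computational: fixing the precise reading of ``can be seen as an $\bef{F}(P/{\approx},P/{\approx})$-coalgebra'' so that it amounts exactly to factoring every input continuation through $[\cdot]$ while reading outputs modulo $\approx$, and being careful that the mixed variance forces this asymmetric formulation. Once that is in place the only thing to double-check is that the collapse of the label quantifier is sound in both directions, and that the Aczel--Mendler definition---a bridging coalgebra on $R$ making the two projections homomorphisms---genuinely unfolds to the pointwise clauses of \cref{def:bisim-val}, which for polynomial \Set-functors is the standard relation-lifting characterisation.
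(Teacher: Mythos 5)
Your argument is sound and establishes the lemma, but it is organised genuinely differently from the paper's proof. You keep the carriers $X,Y$ fixed, descend only the \emph{labels} along $[\cdot]\colon P\to P/{\approx}$ (using the standing restriction that every input continuation is constant on $\approx$-classes so that it factors through the quotient), unfold Aczel--Mendler bisimulation for the polynomial functor into its relation-lifting clauses, and match clause by clause, letting $\approx$-invariance of the continuations absorb the up-to-$\approx$ slack. The paper instead quotients the \emph{states} as well: from a $\approx$-bisimulation $R$ on a coalgebra $h$ it builds $\overline{h}\colon X/R\to \bef{F}(P/{\approx},P/{\approx})(X/R)$ by $[x]_R\mapsto \lambda[p]_{\approx}.[\phi(p)]_R$ in the input case and $[x]_R\mapsto[p]_{\approx}$ in the output case, observes that well-definedness of this assignment is \emph{exactly} the $\approx$-bisimulation condition, and then reads off the coalgebraic bisimulation from the resulting quotient homomorphism; the converse is dispatched by the fact that the quotient maps are regular epimorphisms, i.e.\ by choosing representatives. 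The two routes buy slightly different things. Yours is more elementary and makes the collapse of the label quantifier completely explicit, but it leans harder on the hypothesis that each continuation $\phi$ factors through $[\cdot]$ on the nose; the paper's construction only needs $p\approx q\Rightarrow\phi(p)\mathrel{R}\phi(q)$, i.e.\ invariance \emph{up to $R$}, which a reflexive $\approx$-bisimulation supplies for free, so it also covers coalgebras that do not themselves descend to $\bef{F}(P/{\approx},P/{\approx})$ but whose $R$-quotients do. Your closing worry about the Aczel--Mendler/relation-lifting correspondence is easily discharged: $\Id^{P/{\approx}}+P/{\approx}$ preserves weak pullbacks, so bisimulations are precisely the relations closed under the pointwise clauses you use.
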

\begin{proof}[Proof (sketch)]
	For brevity we consider bisimulations on single systems.
	Let $R$ be a $\approx$-bisimulation for some coalgebra $h$ and define $\overline{h}\colon X/R \to \bef{F}(P/{\approx},P/{\approx})(X/R)$ as:
	\[
		\overline{h}([x]_{R}) \defeq \begin{cases}
			\lambda [p]_{\approx}.[\phi(p)]_{R} & 
				\text{if } g(x) = \inj_l(\phi) \\
			[p]_{\approx} & \text{if } g(x) = \inj_r(p)
		\end{cases}
		\text{.}
	\]
	The coalgebra $\overline{h}$ is well-defined for $R$ is	an $\approx$-bisimulation.
	The implication in the other direction is trivial since $\approx$ is
	an equivalence and \Set has regular epimorphisms.
\end{proof}

\begin{remark}
	Intuitively, dimmed bisimulation \cite{dit:forte2014} compares labelled actions up-to some given equivalence relation that ``dims the light used to distinguish labels'' in order to simplify the automated verification of some properties of stochastic systems. \Cref{def:bisim-val-dimmed} can be seen an extension of dimmed bisimulation to systems with input and output actions. In fact, \cref{thm:bisim-val-dimmed} and its proof readily translate to the setting of \cite{dit:forte2014}.
\end{remark}

\Cref{thm:bisim-val-dimmed} suggests that in order to use coalgebraic bisimulation we have to consider classes of processes with the same semantics instead of plain processes. Further evidence supporting this intuition comes from the fact that bisimulation notions for higher-order calculi found in the literature are usually based on some sort of semantically induced quotient of the process set (\cf \cite{thomsen:acta1993,sangiorgi:ic1996,sks:lics2007,kls:mfps2011}) and, \mutatismutandis, this holds also for approaches based on normal forms.
Nonetheless, a subtle issue remains unresolved: $P$ may not be expressive enough to describe \emph{all possible behaviours}. In other words, there might be systems whose behaviour is modelled by some $\bef{F}(P,P)$-coalgebra but cannot be used as a value whereas higher-order systems are though as operating on systems of the same kind.
Under the light of these observations we propose the use of abstract behaviours instead of processes; formally:
\begin{definition}
	\label{def:hob}
	For $\bef{F}\colon \cat{C}\op \times\cat{C} \to \Endo{\cat{C}}$, an endofunctor $B$ in $\bef{F}$ is said to characterise higher-order behaviours if and only if:
	\begin{equation}
		\label{eq:hob}
		B \cong \bef{F}(|\nu B|,|\nu B|)
		\text{.}
	\end{equation}
\end{definition}

\Cref{def:hob} embraces the inherent circularity of higher-order capturing this defining property in the behavioural functor itself instead of imposing restrictions on the systems considered or requiring \adhoc notions of bisimulation. This approach offers the following key advantages:
\begin{enumerate}
	\item Values are canonically defined in a way that is independent from any syntactic representation of behaviours whereas in the process passing approach it is not as clear how processes and their dynamics are given.
	\item The semantic equivalence on values (\cf $\approx$) is bisimilarity and, by strong extensionality of final coalgebras, it coincides with the identity relation which in turn allows us to apply standard coalgebraic results, like coinductive proof methods.
	\item All behaviours are represented, by definition of final semantics.
\end{enumerate}
Note that in general there are no guarantees about the number of solutions to \eqref{eq:hob}: there may be exactly one, more than one, or even none. For instance, all endofunctors considered in the above examples admit final coalgebras but none satisfies \eqref{eq:hob}.

\subsection{Existence and construction}
\label{sec:hob-solutions}

In this subsection we propose a categorical construction for finding solutions to the equation \eqref{eq:hob} and hence for finding endofunctors modelling higher-order behaviours. In particular, solutions are obtained as fixed points of endofunctors over algebraically compact categories where compactness is due  to the unknown in \eqref{eq:hob} occurring both in covariant and contravariant positions as well. 
Henceforth, we assume $\bef{F}\colon \cat{C}\op \times\cat{C} \to \Endo{\cat{C}}$ to be \Cpo-enriched.

The first step towards rephrasing \eqref{eq:hob} in the language of \Cpo-functors is to define a \Cpo-functor $|\nu-|$ assigning endofunctors over a \Cpo-coalgebraically cocomplete category to the object carrying their final coalgebra.
\begin{lemma}
	\label{thm:cpo-final}
	Assume \cat{C} \Cpo-coalgebraically cocomplete. Any family of assignments $\{G \mapsto |\nu G|\}_{G\in \Endo{\cat{C}}}$ extends to a \Cpo-enriched functor $|\nu-|\colon \Endo{\cat{C}} \to \cat{C}$.
\end{lemma}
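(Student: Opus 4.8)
The plan is to define $|\nu-|$ on morphisms by finality, then verify functoriality and local continuity separately, with the latter being the crux. For the action on a morphism $\alpha\colon G \To H$ of $\Endo{\cat{C}}$ (\ie a natural transformation), I would turn the final $G$-coalgebra $\nu G\colon |\nu G| \to G|\nu G|$ into an $H$-coalgebra by post-composing with the component $\alpha_{|\nu G|}\colon G|\nu G| \to H|\nu G|$, obtaining $c_\alpha \defeq \alpha_{|\nu G|}\circ\nu G\colon |\nu G| \to H|\nu G|$. Since \cat{C} is \Cpo-coalgebraically cocomplete the final $H$-coalgebra $\nu H\colon |\nu H| \to H|\nu H|$ exists, and I define $|\nu\alpha|\colon |\nu G| \to |\nu H|$ to be the unique $H$-coalgebra homomorphism from $(|\nu G|, c_\alpha)$ to it; equivalently $|\nu\alpha|$ is the unique solution of $\nu H\circ|\nu\alpha| = H|\nu\alpha|\circ\alpha_{|\nu G|}\circ\nu G$. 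On objects this is the given assignment, so $|\nu-|$ extends it.

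Next I would check functoriality, which is a formal uniqueness argument. Preservation of identities is immediate: $c_{\id_G} = \nu G$, so $|\nu\id_G|$ is the unique endo-homomorphism of the final $G$-coalgebra, namely $\id_{|\nu G|}$. For a composite $G \xrightarrow{\alpha} H \xrightarrow{\beta} K$ I would show that $|\nu\beta|\circ|\nu\alpha|$ satisfies the equation defining $|\nu(\beta\circ\alpha)|$ and conclude by uniqueness. Substituting the defining equations for $|\nu\beta|$ and $|\nu\alpha|$ and then using naturality of $\beta$ at the morphism $|\nu\alpha|$ (to rewrite $\beta_{|\nu H|}\circ H|\nu\alpha|$ as $K|\nu\alpha|\circ\beta_{|\nu G|}$) yields exactly $K(|\nu\beta|\circ|\nu\alpha|)\circ(\beta\circ\alpha)_{|\nu G|}\circ\nu G$, as required.

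The real work lies in local continuity. Since the order on $\Endo{\cat{C}}(G,H)$ is pointwise and composition in \cat{C} is continuous, the assignment $\alpha \mapsto c_\alpha$ is continuous; hence it suffices to prove that for fixed $X$ and $H$ the anamorphism operator $\mathrm{ana}\colon \cat{C}(X, HX)\to\cat{C}(X,|\nu H|)$, sending a coalgebra structure $c$ to its unique morphism into $\nu H$, is continuous (then take $X = |\nu G|$). The key observation is that, as $\nu H$ is invertible (Lambek's lemma), $\mathrm{ana}(c)$ is precisely the unique fixed point of the continuous operator $\Psi_c\colon \cat{C}(X,|\nu H|)\to\cat{C}(X,|\nu H|)$, $\Psi_c(f) \defeq \nu H^{-1}\circ Hf\circ c$. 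Monotonicity follows from a uniformity argument: if $c \leq c'$ then $\mathrm{ana}(c) = \Psi_c(\mathrm{ana}(c)) \leq \Psi_{c'}(\mathrm{ana}(c))$, so the $\omega$-chain $(\Psi_{c'}^n(\mathrm{ana}(c)))_n$ is increasing, its supremum is a fixed point of $\Psi_{c'}$ by continuity, hence equals $\mathrm{ana}(c')$ by uniqueness, giving $\mathrm{ana}(c)\leq\mathrm{ana}(c')$. For an $\omega$-chain $(c_n)_n$ with supremum $c$, monotonicity gives $g \defeq \bigsqcup_n\mathrm{ana}(c_n)\leq\mathrm{ana}(c)$; to obtain equality I would verify that $g$ is a fixed point of $\Psi_c$, using continuity of $H$ and of composition together with the cofinality of the diagonal in $\omega\times\omega$ to compute $\Psi_c(g) = \bigsqcup_{m,n}\Psi_{c_n}(\mathrm{ana}(c_m)) = \bigsqcup_k\Psi_{c_k}(\mathrm{ana}(c_k)) = \bigsqcup_k\mathrm{ana}(c_k) = g$, whence $g = \mathrm{ana}(c)$ by uniqueness.

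I expect this fixed-point continuity argument to be the main obstacle, the rest being routine. It is worth stressing that it uses only $\omega$-completeness of the hom-posets, continuity of $H$ and of composition, and uniqueness of the fixed point supplied by finality; in particular it needs no bottom elements, so the result holds uniformly for \Cpo- and \Cpob-enrichment, and only continuity (not strictness) is claimed, matching the statement that $|\nu-|$ is merely \Cpo-enriched.
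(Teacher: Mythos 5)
Your construction is exactly the paper's: the paper defines $|\nu\alpha|$ as the coinductive extension of the $H$-coalgebra $\alpha_{|\nu G|}\circ\nu G$ on $|\nu G|$ and then simply asserts that the assignment is ``functorial and continuous (by naturality and enrichment, respectively)''. Your proof follows the same route and is correct; the Lambek-plus-unique-fixed-point argument for continuity of the anamorphism operator is a sound way of discharging in detail the step the paper leaves implicit.
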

\begin{proof}
	On objects, the functor is defined by the given assignment. Each transformation $\phi \in \Endo{\cat{C}}(F,G)$ defines a $G$-coalgebra on the
	carrier of the final $F$-coalgebra:
	\[
		|\nu F| \xrightarrow{\nu F} F|\nu F| \xrightarrow{\phi_{|\nu F|}} G|\nu F|
	\]
	whose coinductive extension to the final $G$-coalgebra (which exists by \Cpo-coalgebraic cocompleteness) defines the action $|\nu \phi|$. It is easy to see that the assignment is functorial and continuous, (by naturality and enrichment, respectively) completing the proof.
\end{proof}
For exposition sake, let us assume a chosen assignment and thus fix $|\nu-|$. We remark that this mild assumption can be avoided by carrying out the constructions below in the (more technically involved) pseudo setting.

For a \Cpo-functor $\bef{F}\colon \cat{C}\op\times\cat{C} \to \Endo{\cat{C}}$,
let $\bef{F}_\nu$ denote:
\[
	\Endo{\cat{C}}\op \times \Endo{\cat{C}} 
	\xrightarrow{|\nu-|\op \times |\nu-|\,}
	\cat{C}\op \times \cat{C} 
	\xrightarrow{\bef{F}\,}
	\Endo{C}
\]
which clearly corresponds to the right hand of \eqref{eq:hob}.
From $\bef{F}_\nu$ define the symmetric endofunctor $\breve{\bef{F}_\nu}$:
\[
	\Endo{\cat{C}}\op \times \Endo{\cat{C}} 
	\xrightarrow{\langle \bef{F}_\nu\op\circ \gamma, \bef{F}_\nu\rangle}
	\Endo{\cat{C}}\op \times \Endo{\cat{C}} 
\]
where $\gamma$ denotes the involution for $\Endo{\cat{C}}\op \times \Endo{\cat{C}}$ \ie the isomorphism $\Endo{\cat{C}}\op \times \Endo{\cat{C}} \cong \Endo{\cat{C}} \times \Endo{\cat{C}}\op$--in the following we shall omit $\gamma$ when clear from the context. Akin to the correspondence between $\bef{F}_\nu$ and \eqref{eq:hob}, $\breve{\bef{F}_\nu}$ corresponds the system:
\[
	\begin{cases}
		B \cong \bef{F}(|\nu D|,|\nu B|)\\
		D \cong \bef{F}\op(|\nu B|,|\nu D|)
	\end{cases}
\]
whose solutions are all $\breve{\bef{F}_\nu}$-invariant objects (\ie fixed points).

Algebras for symmetric endofunctors like $\breve{\bef{F}_\nu}$ are suitable pairs of algebras and coalgebras called dialgebras \cite{glimming:calco2007}. Whenever it exists, the initial algebra of a symmetric endofunctor determines the final coalgebra and \viceversa. In fact, any invariant of a symmetric endofunctor determines a specular one by symmetry (just swap the values for the unknowns $B$ and $D$). An invariant is isomorphic to its specular if, and only if, $B \cong D$ and these are all the solutions to \eqref{eq:hob} as illustrated by its reformulation:
\[
	\begin{cases}
		B \cong \bef{F}(|\nu D|,|\nu B|)\\
		D \cong \bef{F}\op(|\nu B|,|\nu D|)\\
		B \cong D
	\end{cases}
\]
Note that initial and final invariants might not be isomorphic. Algebraic compactness ensures that they are. Moreover, any other invariant is \emph{dominated} by the initial/final one in the sense that it factors the isomorphism between the initial and final invariants; in the \Cpo-enriched setting this yields a coreflection from the dominating to the dominated solution.

It is not known whether the class of \Cpo-algebraically compact categories is closed under exponentiation \cite{fiore:domainbook}; it remains an open question  even if we restrict to the case of ``self-exponentials'' such as $\Endo{\cat{C}}$. We mention from \loccit that algebraic compactness is preserved when the exponential base is \Cpob or, in general, any \emph{algebraically super-compact} category and refer the interested reader to \cite{fiore:domainbook} for further details.

Although we cannot state that if \cat{C} is \Cpo-algebraically compact so is \Endo{\cat{C}}, a weaker result will isuffice for our aims. In fact, unlike arbitrary endofunctors over $\Endo{\cat{C}}\op \times \Endo{\cat{C}}$, $\breve{\bef{F}_\nu}$ factors through the \Cpo-algebraically compact category $\cat{C}\op\times\cat{C}$ as:
\[
	\breve{\bef{F}_\nu} 
	= 
	\langle \bef{F}_\nu\op,\bef{F}_\nu \rangle 
	\cong 
	\langle \bef{F}\op,\bef{F} \rangle \circ (|\nu-|\op\times|\nu-|)
	\text{.}
\]
This observation suggests that we can equivalently look for solutions in $\cat{C}\op\times\cat{C}$---as formalised by \cref{thm:factoring-compact} below.

For a \cat{V}-category \cat{E} let $\EndoFact{\cat{E}}{\cat{D}}$ be the sub-\cat{V}-category of \Endo{\cat{E}} whose objects factor through \cat{D}. The category \cat{E} is at least as algebraically compact with respect to \cat{V}-functors in $\EndoFact{\cat{E}}{\cat{D}}$ as \cat{D} is with respect to \cat{V}-endofunctors in \Endo{\cat{D}}.
\begin{lemma}
	\label{thm:factoring-compact}
	For any \Cpo-category \cat{E} and a \Cpo-endofunctor 
	$F \in \EndoFact{\cat{E}}{\cat{D}}$, the following hold true:
	\begin{enumerate}
		\item 
			If \cat{D} is \Cpo-algebraically complete then, $F$ has an initial algebra.
		\item 
			If \cat{D} is \Cpo-coalgebraically cocomplete then, $F$ has a final coalgebra.
		\item
			If \cat{D} is \Cpo-algebraically compact then, $F$ has an initial algebra, a final coalgebra, and they are canonically isomorphic.
	\end{enumerate}
\end{lemma}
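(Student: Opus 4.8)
The plan is to exploit the factorisation that membership in $\EndoFact{\cat{E}}{\cat{D}}$ provides. By definition $F = R\circ L$ for some \Cpo-functors $L\colon\cat{E}\to\cat{D}$ and $R\colon\cat{D}\to\cat{E}$; fix one such factorisation. The composite taken in the opposite order, $G\defeq L\circ R\colon\cat{D}\to\cat{D}$, is again a \Cpo-endofunctor (composites of \Cpo-functors are \Cpo-functors), so every hypothesis placed on \cat{D} applies to $G$. The whole argument then rests on a single transfer principle: an initial algebra (resp.\ final coalgebra) for $G$ on \cat{D} is carried by $R$ to an initial algebra (resp.\ final coalgebra) for $F$ on \cat{E}. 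All three statements follow from this principle and its self-dual companion, with a little extra bookkeeping for the compact case.

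For (1), suppose $(B,\beta\colon GB\to B)$ is the initial $G$-algebra. Since $F(RB) = RLRB = R(GB)$, the map $R\beta$ is an $F$-algebra structure $R\beta\colon F(RB)\to RB$, and I claim $(RB,R\beta)$ is initial. Given any $F$-algebra $(C,\gamma\colon RLC\to C)$, applying $L$ yields a $G$-algebra $(LC,L\gamma\colon G(LC)\to LC)$; let $\phi\colon B\to LC$ be the unique $G$-algebra morphism, so that $\phi\circ\beta = L\gamma\circ LR\phi$. Setting $\psi\defeq\gamma\circ R\phi\colon RB\to C$, a short chase applying $R$ to this equation shows $\psi\circ R\beta = \gamma\circ F\psi$, so $\psi$ is an $F$-algebra morphism.

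The delicate part, and the step I expect to be the main obstacle, is uniqueness, where Lambek's lemma is essential: it makes $\beta$—hence $R\beta$—an isomorphism. Given any $F$-algebra morphism $\psi'\colon RB\to C$, I form $\phi'\defeq L\psi'\circ\beta^{-1}\colon B\to LC$ and verify, using the morphism equation for $\psi'$, that $\phi'$ is a $G$-algebra morphism; initiality of $B$ forces $\phi'=\phi$, and then inverting the isomorphism $R\beta$ in the equation $\psi'\circ R\beta = \gamma\circ RL\psi'$ recovers $\psi' = \gamma\circ R\phi = \psi$. Statement (2) is obtained by dualising this argument verbatim: a final $G$-coalgebra $(B,\beta\colon B\to GB)$ gives the final $F$-coalgebra $(RB,R\beta\colon RB\to F(RB))$.

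For (3), \Cpo-algebraic compactness of \cat{D} supplies the initial $G$-algebra $(B,\beta)$ and the final $G$-coalgebra $(B',\beta')$ together with the canonical isomorphism $\kappa\colon B\to B'$ linking them, that is, the unique $G$-algebra morphism $(B,\beta)\to(B',(\beta')^{-1})$, which is invertible. By (1) and (2) the pairs $(RB,R\beta)$ and $(RB',R\beta')$ are the initial $F$-algebra and the final $F$-coalgebra. Since $R$ is a functor, $R\kappa$ is an isomorphism, and the identity $F(R\kappa) = R(G\kappa)$ lets me transport the defining equation of $\kappa$ to show that $R\kappa$ is an $F$-algebra morphism from $(RB,R\beta)$ to the $F$-algebra $(RB',(R\beta')^{-1})$ obtained by inverting the final coalgebra (here $R\beta'$ is invertible, again by Lambek). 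By the uniqueness established in (1), $R\kappa$ is the canonical comparison map, and being an isomorphism it exhibits the initial $F$-algebra and the final $F$-coalgebra as canonically isomorphic.
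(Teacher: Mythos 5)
Your proof is correct and follows essentially the same route as the paper's: you factor $F = R\circ L$ through \cat{D}, take the initial algebra of the swapped composite $L\circ R$ (which exists by the hypothesis on \cat{D}), and roll it across $R$ to obtain the initial $F$-algebra, with uniqueness resting on initiality in \cat{D} together with Lambek's lemma; parts (2) and (3) then follow by duality and by transporting the canonical isomorphism. The differences from the paper's proof are only cosmetic: it names the factors $G\circ H$ and phrases uniqueness via the derived equation $Hy = x\circ h$ rather than via your explicit comparison morphism $\phi' = L\psi'\circ\beta^{-1}$.
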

\begin{proof}
	By hypotheses there are ${G\colon \cat{D} \to \cat{E}}$ and ${H\colon \cat{E} \to \cat{D}}$ such that $F = G\circ H$. By \Cpo-algebraic completeness of \cat{D} the \Cpo-endofunctor	$H\circ G$ admits an initial algebra $h\colon HGX \to X$; therefore there is at least one $F$-invariant: $Gh \colon (GH)GX \to GX$. For an $F$-algebra $g\colon FY \to Y$ consider the $HG$-algebra $Hg$ and let $x\colon X \to HY$ be its inductive	extension. The morphism $g\circ Gx\colon GX \to Y$ in \cat{E} defines an $F$-algebra morphism from $Gh$ to $g$ proving that the former is weakly initial.	By initiality of $h\colon HGX \to X$, we have that $Hy = x \circ h$ for any algebra morphism $y\colon GX \to Y$ going from $Gh$ to $g$ and hence that:
	\[
		Fy = GHy = G(x\circ h) = F(g \circ Gx) = G(Hg \circ x)
		\text{.}
	\]
	Since $y$ and $g \circ Gx$ are $F$-algebra morphisms we have:
	\[
		g \circ G(x\circ h) = F(g \circ Gx) = g \circ Fy = y \circ Gh
	\]
	and hence $g \circ Gx\circ Gh$ is unique up-to the isomorphism $Gh$ \ie the chosen initial $F$-algebra. Dually, if \cat{D} is \Cpo-coalgebraically cocomplete then $F$ has a final coalgebra. If \cat{D} is \Cpo-algebraically compact then the canonical isomorphism between the initial $HG$-algebra and the final $HG$-coalgebra yields a canonical isomorphism between the initial $F$-algebra	and the final $F$-coalgebra.
\end{proof}

Intuitively, \cref{thm:factoring-compact} says that solving \eqref{eq:hob} for the endofunctor or for its final coalgebra is essentially the same. In fact, we can define a symmetric endofunctor on $\cat{C}\op\times\cat{C}$ starting from $\bef{F}$ and $|\nu-|$ and such that its initial algebra determines the initial algebra for $\breve{\bef{F}_\nu}$ and \viceversa, namely:
\[(|\nu-|\op\times|\nu-|)\circ \langle \bef{F}\op, \bef{F} \rangle\text{.}\]

We are now able to state the main result of this section, namely the existence
of unique (up-to iso) dominating solutions to \eqref{eq:hob}.

\begin{theorem}
	\label{thm:solutions}
	Assume \cat{C} \Cpo-algebraically compact. For any \Cpo-enriched $\bef{F}\colon \cat{C}\op\times\cat{C} \to \Endo{\cat{C}}$, \eqref{eq:hob} admits a unique (up-to isomorphism) dominating solution.
\end{theorem}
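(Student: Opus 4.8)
The plan is to recast \eqref{eq:hob} as a fixed-point problem for the symmetric endofunctor $\breve{\bef{F}_\nu}$ on $\Endo{\cat{C}}\op\times\Endo{\cat{C}}$ and to transport it, via the factorisation
\[
	\breve{\bef{F}_\nu}\cong \langle\bef{F}\op,\bef{F}\rangle\circ(|\nu-|\op\times|\nu-|)\text{,}
\]
to the category $\cat{C}\op\times\cat{C}$, which is \Cpo-algebraically compact by \cref{cor:compact-product-dual} (note that $\cat{C}$ being compact makes it in particular \Cpo-coalgebraically cocomplete, so the functor $|\nu-|$ is available by \cref{thm:cpo-final}). Since $\breve{\bef{F}_\nu}$ lies in $\EndoFact{\Endo{\cat{C}}\op\times\Endo{\cat{C}}}{\cat{C}\op\times\cat{C}}$, \cref{thm:factoring-compact} (item 3) immediately yields an initial algebra that is canonically isomorphic to the final coalgebra; I write $\Omega=(D_0,B_0)$ for this invariant.

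Next I would extract a genuine solution from $\Omega$ using symmetry. The swap involution $\gamma$ on $\Endo{\cat{C}}\op\times\Endo{\cat{C}}$ satisfies $\gamma\circ\breve{\bef{F}_\nu}=\breve{\bef{F}_\nu}\circ\gamma$ by construction (both sides send $(X,Y)$ to $(\bef{F}_\nu(X,Y),\bef{F}_\nu(Y,X))$), so $\gamma$ intertwines $\breve{\bef{F}_\nu}$ with itself and hence carries initial $\breve{\bef{F}_\nu}$-algebras to initial $\breve{\bef{F}_\nu}$-algebras. Initial algebras being unique up to isomorphism, $\gamma(\Omega)=(B_0,D_0)\cong(D_0,B_0)=\Omega$, whence $B_0\cong D_0$ in $\Endo{\cat{C}}$. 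Setting $B\defeq B_0$ and unfolding the invariant gives $B\cong\bef{F}_\nu(B,B)=\bef{F}(|\nu B|,|\nu B|)$, so $B$ solves \eqref{eq:hob}; equivalently, $(B,B)$ is isomorphic to the diagonal invariant $\Omega$.

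Finally I would address domination and uniqueness. Every solution $B'$ of \eqref{eq:hob} determines a diagonal invariant $(B',B')$, and $\Omega$ is at once initial among $\breve{\bef{F}_\nu}$-algebras and final among its coalgebras. Thus initiality supplies an algebra morphism $m\colon\Omega\to(B',B')$ and finality a coalgebra morphism $m'\colon(B',B')\to\Omega$; in the \Cpo-enriched setting the limit-colimit coincidence forces these into a coreflection $\Omega\lhd(B',B')$ in $\Endo{\cat{C}}\op\times\Endo{\cat{C}}$, so that the canonical isomorphism between the initial algebra and the final coalgebra factors through $(B',B')$. Projecting onto the covariant component yields a coreflection $B\lhd B'$ in $\Endo{\cat{C}}$, which is exactly the claim that $B$ dominates $B'$. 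Uniqueness up to isomorphism then follows because any two dominating solutions dominate each other, and a pair of mutually inverse coreflections is an isomorphism.

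I expect the crux to lie in the two ``soft'' steps carried by the earlier machinery rather than in any computation: first, justifying that $\gamma$ commutes with $\breve{\bef{F}_\nu}$ so that the canonical invariant is forced onto the diagonal---this is precisely where the symmetric construction earns its keep; and second, upgrading the bare universal morphisms $m$, $m'$ between $\Omega$ and an arbitrary invariant to an embedding-projection pair. The latter rests on the minimal-invariant characterisation of the canonically isomorphic initial algebra and final coalgebra in an algebraically compact \Cpo-category, \ie that $\id_\Omega$ is the least solution of the induced fixed-point operator on $(\Endo{\cat{C}}\op\times\Endo{\cat{C}})(\Omega,\Omega)$; identifying $m'\circ m$ with this least fixed point, and checking $m\circ m'\leq\id$, is the one place where the enriched structure is genuinely needed, and the step I would write out most carefully.
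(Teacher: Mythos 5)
Your proposal is correct and follows the paper's own proof essentially step for step: factor the symmetric endofunctor $\breve{\bef{F}_\nu}$ through the \Cpo-algebraically compact category $\cat{C}\op\times\cat{C}$ (\cref{cor:compact-product-dual}, \cref{thm:cpo-final}), obtain the canonically isomorphic initial algebra and final coalgebra via \cref{thm:factoring-compact}, extract $B$ by symmetry, and derive domination from initiality/finality. The one minor slip is that the swap $\gamma$ is contravariant on $\Endo{\cat{C}}\op\times\Endo{\cat{C}}$ and so sends the initial $\breve{\bef{F}_\nu}$-algebra to the \emph{final} $\breve{\bef{F}_\nu}$-coalgebra rather than to another initial algebra; the conclusion $\gamma(\Omega)\cong\Omega$, and hence $B_0\cong D_0$, still follows immediately from the canonical isomorphism supplied by compactness, which you have already invoked.
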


\begin{proof}
	By \cref{cor:compact-product-dual} the category $\cat{C}\op\times\cat{C}$ is \Cpo-algebraically compact. By \cref{thm:cpo-final}, the symmetric endofunctor $\breve{\bef{F}_\nu}$ is a \Cpo-endofunctor. Since $\breve{\bef{F}_\nu}$ factors through a \Cpo-algebraically compact category	it is possible to apply \cref{thm:factoring-compact} and thus the initial $\breve{\bef{F}_\nu}$-algebra and the final $\breve{\bef{F}_\nu}$-coalgebra exists and are canonically isomorphic yielding, by symmetry, the required endofunctor $B$ over \cat{C}. Finally, initiality/finality ensures that other solution necessarily factors the aforementioned canonical isomorphism and hence is dominated by $B$.
\end{proof}

The 2-categorical structure embodied by the order-enrichment is purely functional to achieving algebraic compactness. Indeed, any solution obtained in the settings of \cref{thm:solutions} yields a (ordinary) endofunctor solution to \eqref{eq:hob} where the the family of behavioural endofunctors considered is composed by any endofunctor underlying a \Cpo-endofunctor described by $\bef{F}$.

\begin{corollary}
	\label{cor:solutions-not-enriched}
	Assume \cat{C} \Cpo-algebraically compact. For any \Cpo-enriched $\bef{F}\colon \cat{C}\op\times\cat{C} \to \Endo{\cat{C}}$ there exists a unique (up-to iso) dominating solution to $B \cong \underlying{\underlying{\bef{F}}(|\nu B|,|\nu B|)}$.
\end{corollary}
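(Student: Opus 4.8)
The plan is to \emph{transport} \cref{thm:solutions} along the underlying-category $2$-functor $\underlying{-}\colon \VCat \to \Cat$, checking that the relevant data are both preserved and reflected. First I would invoke \cref{thm:solutions} to obtain the \Cpo-enriched dominating solution: a \Cpo-endofunctor $B$ on \cat{C} together with a \Cpo-natural isomorphism $B \cong \bef{F}(|\nu B|,|\nu B|)$ in \Endo{\cat{C}}. Since any functor preserves isomorphisms, applying $\underlying{-}$ gives at once $\underlying{B} \cong \underlying{\bef{F}(|\nu B|,|\nu B|)} = \underlying{\underlying{\bef{F}}(|\nu B|,|\nu B|)}$ in \Endo{\underlying{\cat{C}}}. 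Thus $\underlying{B}$ is the candidate ordinary solution, and it only remains to reconcile the two readings of $|\nu-|$ and to transport the domination statement.

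The crucial step is to show that the carrier of the \emph{enriched} final coalgebra supplied by \cref{thm:cpo-final} coincides with the carrier of the \emph{ordinary} final coalgebra of the underlying endofunctor, i.e. $|\nu \underlying{B}| \cong |\nu B|$. I would argue this abstractly: a coalgebra for $B$ carries exactly the data of a coalgebra for $\underlying{B}$, and the underlying category of the \Cpo-enriched category of $B$-coalgebras is precisely the ordinary category of $\underlying{B}$-coalgebras, since objects are structure maps and hom-objects are the posets of coalgebra morphisms. As the terminal object of \Cpo has a singleton underlying set, a \Cpo-enriched terminal coalgebra has singleton hom-sets into it in the underlying category, hence is an ordinary terminal coalgebra. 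Therefore the final coalgebra produced by algebraic compactness is already final at the level of \underlying{\cat{C}}, giving the coincidence; substituting it yields $\underlying{B} \cong \underlying{\underlying{\bef{F}}(|\nu \underlying{B}|,|\nu \underlying{B}|)}$, so $\underlying{B}$ genuinely solves the non-enriched equation.

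For uniqueness and domination I would exploit that $\underlying{-}$ is faithful and \emph{reflects isomorphisms} between \Cpo-functors: the hom-objects of \Endo{\cat{C}} are posets on ordinary natural transformations, and an isomorphism in \Cpo is just a monotone bijection with monotone inverse, so any ordinary natural isomorphism between \Cpo-endofunctors already has \Cpo-iso components and is automatically \Cpo-natural. Consequently an ordinary solution $B'$ in the family---one underlying some $\bef{F}(V,W)$, say $B' = \underlying{\tilde{B}'}$---lifts: using the carrier coincidence of the previous paragraph, $|\nu B'| \cong |\nu \tilde{B}'|$, so the ordinary isomorphism $B' \cong \underlying{\underlying{\bef{F}}(|\nu B'|,|\nu B'|)}$ reflects to an enriched isomorphism $\tilde{B}' \cong \bef{F}(|\nu \tilde{B}'|,|\nu \tilde{B}'|)$, making $\tilde{B}'$ an enriched solution to \eqref{eq:hob}. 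The uniqueness part of \cref{thm:solutions} forces $\tilde{B}' \cong B$, whence $B' \cong \underlying{B}$; dually, the dominating coreflection $e \lhd p$ guaranteed by \cref{thm:solutions} maps under $\underlying{-}$ to an ordinary pair with $\underlying{p}\circ\underlying{e} = \id$, which is exactly the factorisation witnessing that $\underlying{B}$ dominates every such $B'$.

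I expect the main obstacle to be the second step: rigorously identifying the \emph{enriched} $|\nu-|$ of \cref{thm:cpo-final} with the \emph{ordinary} final coalgebra of the underlying functor, and verifying that this identification is uniform (natural) in the relevant sense, so that it may be substituted inside \bef{F} without disturbing the naturality and the domination witnesses rather than merely holding objectwise. The terminal-object argument settles the carriers cleanly, and once the identification is seen to be natural the remaining reflection-of-isomorphisms bookkeeping is routine, precisely because \Cpo-isomorphisms are nothing more than ordinary isomorphisms.
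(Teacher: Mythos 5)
Your overall route is the one the paper intends: \cref{cor:solutions-not-enriched} is stated without an explicit proof precisely because it is meant to follow from \cref{thm:solutions} by passing to underlying functors, as the remark preceding it indicates (``the order-enrichment is purely functional to achieving algebraic compactness''). Your existence half is correct and usefully explicit: the identification of the enriched and the ordinary final coalgebra is sound, since the terminal object of \Cpo is the one-element poset and a hom-object with a singleton underlying set is automatically terminal in \Cpo, so terminality in the \Cpo-category of coalgebras coincides with ordinary terminality and $|\nu B|\cong|\nu\underlying{B}|$; likewise \Cpo-naturality reduces to ordinary naturality because $\Cpo(1,-)$ is faithful, so natural isomorphisms reflect along $\underlying{-}$.

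The genuine problem is in your uniqueness step. From the reflected enriched isomorphism $\tilde{B}'\cong\bef{F}(|\nu\tilde{B}'|,|\nu\tilde{B}'|)$ you conclude that ``the uniqueness part of \cref{thm:solutions} forces $\tilde{B}'\cong B$''. But \cref{thm:solutions} asserts uniqueness only of the \emph{dominating} solution; equation \eqref{eq:hob} may well have several non-isomorphic solutions, and the paper is explicit that other invariants are merely \emph{dominated} by the canonical one (they factor the initial-to-final isomorphism, yielding a coreflection). As written, your argument proves the false statement that every ordinary solution is isomorphic to $\underlying{B}$, and in doing so does not actually establish what the corollary claims, namely uniqueness among \emph{dominating} ordinary solutions. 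The repair is available from the machinery you already set up: the lifting/reflection correspondence puts ordinary solutions in the family in bijection (up to iso) with enriched solutions, and since the coreflections and the canonical factorisation transport along $\underlying{-}$ and reflect back by the same faithfulness argument, dominating ordinary solutions correspond to dominating enriched solutions; uniqueness of the latter then gives uniqueness of the former. For an arbitrary solution $B'$ you should conclude only that it is dominated by $\underlying{B}$, which is the (correct) second half of your final sentence.
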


\begin{figure}[t]
	\centering
	\begin{tikzpicture}[font=\footnotesize,auto,
		baseline=(current bounding box.center),
		xscale=1.6, yscale=1.3,
		proj/.style={->>,transform canvas={xshift=-.6ex}},
		emb/.style={>->,transform canvas={xshift=.6ex}},
		]
				
		\node (n-0-1) at (0,6) {\(1\)};
		\node (n-0-2) at (1,6) {\(1\)};
		\node (n-0-3) at (2,6) {\(1\)};
		\node (n-0-4) at (2.8,6) {\(\dots\)};
		\node (n-0-5) at (3.1,6) {\(1\)};
	
		\node (n-1-1) at (0,5) {\(1\)};
		\node (n-1-2) at (1,5) {\(\bef{F}_{1,1}(1)\)};
		\node (n-1-3) at (2,5) {\(\bef{F}^2_{1,1}(1)\)};
		\node (n-1-4) at (2.8,5) {\(\ \cdots\)};
		\node (n-1-5) at (3.1,5) {\(Z_1\)};
		\node[anchor=west] (n-1-6) at (3.2,5) {\(\cong \bef{F}_{1,1}(Z_1)\)};
		
		\node (n-2-1) at (0,4) {\(1\)};
		\node (n-2-2) at (1,4) {\(\bef{F}_{Z_1,Z_1}(1)\)};
		\node (n-2-3) at (2,4) {\(\bef{F}^2_{Z_1,Z_1}(1)\)};
		\node (n-2-4) at (2.8,4) {\(\ \cdots\)};
		\node (n-2-5) at (3.1,4) {\(Z_2\)};
		\node[anchor=west] (n-2-6) at (3.2,4) {\(\cong \bef{F}_{Z_1,Z_1}(Z_2)\)};

		\node (n-3-1) at (0,3) {\(1\)};
		\node (n-3-2) at (1,3) {\(\bef{F}_{Z_2,Z_2}(1)\)};
		\node (n-3-3) at (2,3) {\(\bef{F}^2_{Z_2,Z_2}(1)\)};
		\node (n-3-4) at (2.8,3) {\(\ \cdots\)};
		\node (n-3-5) at (3.1,3) {\(Z_3\)};
		\node[anchor=west] (n-3-6) at (3.2,3) {\(\cong \bef{F}_{Z_2,Z_2}(Z_3)\)};
				
		\node (n-4-1) at (0,2) {\(\vdots\)};
		\node (n-4-2) at (1,2) {\(\vdots\)};
		\node (n-4-3) at (2,2) {\(\vdots\)};
		\node (n-4-4) at (2.8,2) {\(\ \ddots\)};
		\node (n-4-5) at (3.1,2) {\(\vdots\)};
		\node[anchor=west] (n-4-6) at (3.2,2) {\(\)};
	
		\node (n-5-1) at (0,1.5) {\(1\)};
		\node (n-5-2) at (1,1.5) {\(\bef{F}_{Z,Z}(1)\)};
		\node (n-5-3) at (2,1.5) {\(\bef{F}^2_{Z,Z}(1)\)};
		\node (n-5-4) at (2.8,1.5) {\(\ \cdots\)};
		\node (n-5-5) at (3.1,1.5) {\(Z\)};
		\node[anchor=west] (n-5-6) at (3.2,1.5) {\(\cong \bef{F}_{Z,Z}(Z)\ \ \)};
			
		\draw[<-] (n-0-1) -- (n-0-2);
		\draw[<-] (n-0-2) -- (n-0-3);
		\draw[<-] (n-0-3) -- (n-0-4);
		
		\draw[<-] (n-1-1) -- (n-1-2);
		\draw[<-] (n-1-2) -- (n-1-3);
		\draw[<-] (n-1-3) -- (n-1-4);
		
		\draw[<-] (n-2-1) -- (n-2-2);
		\draw[<-] (n-2-2) -- (n-2-3);
		\draw[<-] (n-2-3) -- (n-2-4);
		
		\draw[<-] (n-3-1) -- (n-3-2);
		\draw[<-] (n-3-2) -- (n-3-3);
		\draw[<-] (n-3-3) -- (n-3-4);
		
		\draw[<-] (n-5-1) -- (n-5-2);
		\draw[<-] (n-5-2) -- (n-5-3);
		\draw[<-] (n-5-3) -- (n-5-4);
		
		\draw[emb] (n-0-1) -- (n-1-1);
		\draw[proj] (n-1-1) -- (n-0-1);
		\draw[emb] (n-0-2) -- (n-1-2);
		\draw[proj] (n-1-2) -- (n-0-2);
		\draw[emb] (n-0-3) -- (n-1-3);
		\draw[proj] (n-1-3) -- (n-0-3);
		\draw[emb] (n-0-5) -- (n-1-5);
		\draw[proj] (n-1-5) -- (n-0-5);
		
		\draw[emb] (n-1-1) -- (n-2-1);
		\draw[proj] (n-2-1) -- (n-1-1);
		\draw[emb] (n-1-2) -- (n-2-2);
		\draw[proj] (n-2-2) -- (n-1-2);
		\draw[emb] (n-1-3) -- (n-2-3);
		\draw[proj] (n-2-3) -- (n-1-3);
		\draw[emb] (n-1-5) -- (n-2-5);
		\draw[proj] (n-2-5) -- (n-1-5);
		
		\draw[emb] (n-2-1) -- (n-3-1);
		\draw[proj] (n-3-1) -- (n-2-1);
		\draw[emb] (n-2-2) -- (n-3-2);
		\draw[proj] (n-3-2) -- (n-2-2);
		\draw[emb] (n-2-3) -- (n-3-3);
		\draw[proj] (n-3-3) -- (n-2-3);
		\draw[emb] (n-2-5) -- (n-3-5);
		\draw[proj] (n-3-5) -- (n-2-5);
		
		\draw[emb] (n-3-1) -- (n-4-1);
		\draw[proj] (n-4-1) -- (n-3-1);
		\draw[emb] (n-3-2) -- (n-4-2);
		\draw[proj] (n-4-2) -- (n-3-2);
		\draw[emb] (n-3-3) -- (n-4-3);
		\draw[proj] (n-4-3) -- (n-3-3);
		\draw[emb] (n-3-5) -- (n-4-5);
		\draw[proj] (n-4-5) -- (n-3-5);
			
	\end{tikzpicture}
	\caption{Computing solutions to \protect\eqref{eq:hob} by means of bi-chains.}
	\label{fig:unfolding-thm-solution}
\end{figure}
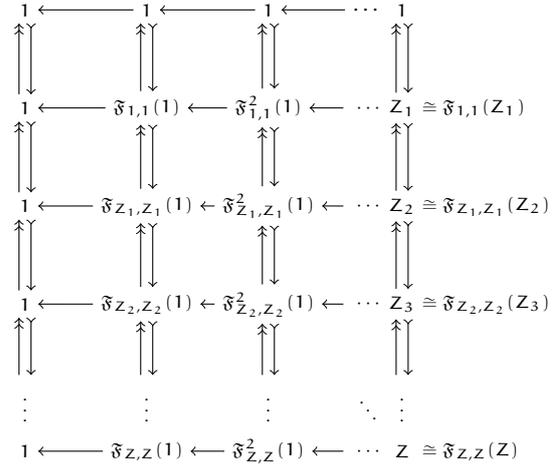

The proof of \cref{thm:solutions} is based on existence of initial and symmetric invariants for $\breve{\bef{F}_\nu}$ and, under the assumption of \cref{thm:cpob-cpo-compactness}, these can be computed via initial sequences for $\breve{\bef{F}_\nu}$. By unfolding the sequence leading to such invariants we obtain the diagram illustrated in \cref{fig:unfolding-thm-solution}. 
Horizontal arrows form final sequences which in turn determine the final coalgebras to be used for instantiating the behavioural functor of the successive iteration. Vertical arrows form chains of embedding-projection pairs and hence characterise horizontal layers as approximations converging to the limiting final sequence depicted in the bottom of the diagram. From this perspective, horizontal layers can be read as \emph{finite-order} behaviours approximating the higher-order solution. In fact, the first row characterises the null-order \eg base values, atoms \etc (recall that the final object in a slice category ${\cat{C} \downarrow V}$ is $\id_V$); behaviours for $\bef{F}(1,1)$ exchange base values and hence are first-order processes and so on. In general, behaviours for $\bef{F}(Z_n,Z_n)$ exchange (abstract) behaviours of order $n$ end hence belong to the order $n+1$. This $\omega$-sequence is limited by higher-order behaviours.

\subsection{Examples}
\label{sec:hob-examples}

\paragraph{Deterministic higher-order behaviours}
In \cref{sec:hob-characterisation} we considered as a running example the parameterised endofunctor $\bef{F}(V,W) = \Id^V + W$ over \Set. Since the cardinality of the set carrying the final $\bef{F}(V,W)$-coalgebra always exceeds both $|V|$ and $|W|$ there are no solutions to \eqref{eq:hob}. Behaviours characterised by this functor are closely related to the domain equation $D \cong (D \to D) + D$: this equation cannot be solved in \Set but admits a unique dominating solution in \Cpob. This observation prompted us to study $\Id^V + W$ as an endofunctor over \Cpob.

Let $(X \to_\perp Y)$ denote the space of continuous bottom-strict functions ordered pointwise and consider $\bef{F}(V,W) = (V \to_\perp \Id) + W$. For any $V$ and $W$, the final $\bef{F}(V,W)$-coalgebra exists and determines all trees whose leaves are in $W$ and whose branches are indexed (on each node) by continuous bottom-strict functions from $V$. Intuitively, bottom elements can be read as unresponsive behaviours like deadlocks and hence inputs modelled by $(V \to_\perp \Id)$ force a system to deadlock whenever it inputs $\perp_V$. Behaviours that deadlock on any input or terminate with output $\perp$ coincide for coproducts are strict. Thus, $\bef{F}(V,W)$ captures eager deterministic computations.

Since the functor $\bef{F}$ is \Cpo-enriched and \Cpob is \Cpo-algebraically compact we conclude by \cref{thm:solutions} that there exists an endofunctor in $\bef{F}$ solution to \cref{def:hob} and that it can be computed as the limit of the sequence depicted in \cref{fig:unfolding-thm-solution}. The fixed point is reached after the first iteration for $1 \cong |\bef{F}(1,1)|$. Indeed, $1$ solves $D \cong (D \to_\perp D) + D$ in \Cpob.

Inspired by the intuitive correlation with domain equations, consider $\bef{F}(V,W) = (V \to_\perp \Id) + W + A$ where $A \neq 1$. Computations are strict as before but now behaviours can terminate returning an atom from $A$. \Cref{thm:solutions} applies also to this family of endofunctors but (thanks to $A \neq 1$) the sequence leading to $B$ contains countably many iterations. Intuitively, abstract behaviours described by the final $B$-coalgebra of are trees with atoms and abstract behaviours as leaves and children indexed by abstract behaviours provided such indexing respect continuity and strictness.

The functors described above are examples of polynomial functors parameterised by $(V, W)$ like those generated by the following simple grammar:
\begin{align*}
	\bef{F}_i(V,W) \Coloneqq\ &
	(X \to_{\perp} \Id) \mid X \mid (V \to_{\perp} \Id) \mid W \mid \\ &
	\bef{F}_1(V,W) + \bef{F}_2(V,W) \mid \\ &
	 \bef{F}_1(V,W) \times \bef{F}_2(V,W)
\end{align*}
All these functors meet the hypotheses of \cref{thm:solutions}.

\paragraph{Non-deterministic higher-order behaviours}
Bounded non-determinism is modelled in the context of \Set by means of the bounded powerset functor $\mathcal{P}_{\!\omega}$. Families of behavioural endofunctors of practical interest and based on $\mathcal{P}_{\!\omega}$ hardly contain endofunctors modelling higher-order behaviours. For instance, \eqref{eq:hob} does not admit any solution when instantiated to \eqref{eq:eg-parametric-beh}. Thus, we model non-deterministic higher-order computations in \Cpob.

Let $\mathbb{B}$ be the boolean lattice. For a map $\phi\colon X \to \mathbb{B}$ in \Cpob, the subset $\phi^{-1}(\top)$ of $X$ is upward closed and does not contain $\perp$ ($\phi$ is monotone and bottom-strict); these subsets of $X$ are equivalent to $\Cpob(X,\mathbb{B})$. Likewise, $\Cpo(X,\mathbb{B})$ determines all upward closed subsets of $X$ since a map $\phi\colon X \to \mathbb{B}$ in \Cpo may map $\bottom_X$ to $\top$. If the order on $X \in \Cpo$ is the anti-chain ordering then any subset of $X$ is trivially upward closed. In fact, the endofunctor $(\Id \to \mathbb{B})$ over \Cpo yields $\mathcal{P}$ by composition with the forgetful functor $U\colon \Cpo \to \Set$ and with the insertion functor $I\colon \Set \to \Cpo$ (which equips each set with the anti-chain ordering). Finally, note that $(\Id \to \mathbb{B})\colon \Cpo \to \Cpo$ restricts to \Cpob. Thus, $(\Id \to \mathbb{B})$ is a good candidate for modelling non-determinism in the contexts of \Cpo and \Cpob.

Both $(\Id \to_{\perp} \mathbb{B})$ and $(\Id \to \mathbb{B})$ are \Cpo-endofunctors over \Cpob; their final coalgebras are carried by $1$ and $\omega$, respectively. This difference reflects the kind of non-deterministic behaviours captured by the two endofunctors. In the first case, a coalgebra can either map a state to the empty set or to some upset that does not contain the bottom element for these are described by strict functions to $\mathbb{B}$---reworded, behaviours are either stuck or able to proceed in a non-deterministic fashion. In the second case, a coalgebra can map a state to any upset meaning that behaviours may get stuck at any time, non-deterministically.

\paragraph{Higher-order CCS}
The late semantics of the CCS with values has been shown in \cite{ft:lics2001} to be captured by the (parameterised) endofunctor \eqref{eq:eg-firstorder-beh} over \Set. However, there is no set of values such that the resulting endofunctor meets the condition in \cref{def:hob}.
Akin to the previous examples, we move from \Set to \Cpob in order to define endofunctors modelling non-deterministic processes that synchronously exchange values along the lines of \eqref{eq:eg-firstorder-beh}.

Fix two objects $V$ and $C$ for values and channels, respectively. Deterministic outputs over channels are characterised by the endofunctor 
$C \times V \times \Id$. Deterministic inputs are described by the endofunctor $C \times (V \to \Id)$. Note that the function space includes also non-strict functions meaning that receiving $\perp$ does not force a system to deadlock---we are interested in the late interpretation of value passing. The non-deterministic component of the behaviour is provided by the ``strict upset'' endofunctor $\mathcal{U}_{\!\perp} = (\Id \to_\perp \mathbb{B})$ since this choice ensures that a process is either stuck or can non-deterministically perform an output, an input, or a silent action. By combining these elements we obtain the family of endofunctors:
\[
	\bef{F}(V,W) = \mathcal{U}_{\!\perp}(
		\overbrace{C \times W \times \Id}^{\text{output}} + 
		\overbrace{C \times (V \to \Id)}^{\text{input}} + 
		\overbrace{\Id}^{\tau}
	)\text{.}
\]
By construction, $\bef{F}$ is \Cpo-enriched and, by \cref{thm:solutions}, there is an endofunctor modelling higher-order systems.

\section{Lifted and dropped solutions}
\label{sec:hob-generalised}

By inspecting the diagram shown in \cref{fig:unfolding-thm-solution} it is clear that limit-colimit coincidence is required only to ensure sequences of embedding-projection pairs are limited. Reworded, only the category of parameters is required to be algebraically compact. In \cref{sec:hob} parameters and behaviours are modelled in the same category but this observation suggest the possibility of considering functors of type $\cat{D}\op\times\cat{D} \to \Endo{\cat{C}}$ where only $\cat{D}$ is assumed algebraically compact. Although this result may appear mainly technical, it allows us to cover a wider class of behaviours since it may often be useful, if not outright necessary, to model parameters and behaviours in different categories in order to simplify the computations of \cref{thm:solutions} or to cover behaviours not expressible as functors of type $\cat{C}\op\times\cat{C} \to \Endo{\cat{C}}$. For instance, behavioural endofunctors might be defined on a (suitably enriched) category of spaces whereas parameters are restricted to range over its subcategory of exponentiable ones. Likewise, one might consider the Kleisli category for a monad and its underlying category---along the lines of \cite{simpson:rr1992}.

The first challenge in characterise endofunctors in $\bef{F}\colon \cat{D}\op\times\cat{D} \to \Endo{\cat{C}}$ that model higher-order behaviours is that the categories where we model systems and exchanged behaviours are distinct. Since higher-order behaviours are meant to operate on behaviours of the same type, we need a way to mediate between their representations in $\cat{D}$ and $\cat{C}$. Intuitively, this means that although systems are modelled as coalgebras for endofunctors over $\cat{C}$, their abstract behaviours (\ie semantics) ``live'' in $\cat{D}$. To this end, we consider behavioural endofunctors with a ``counterpart'' over the category of parameters and some functor between the involved categories that ``mediates'' their behaviours.

\subsection{Families of lifted endofunctors}
\label{sec:hob-lifted}

Here we assume to be given a functor $R\colon \cat{D} \to \cat{C}$ to act as  mediator for behaviours and behavioural endofunctors---the exact meaning of this intuition will be formalised shortly.

An endofunctor $G\in \Endo{\cat{C}}$ is said to be \emph{lifted} along $R$ if there is an endofunctor $H\in \Endo{\cat{D}}$ such that $R \circ H \cong G \circ R$. In order to generalise this condition beyond objects in $\Endo{\cat{C}}$ and $\Endo{\cat{D}}$ consider the following 2-pullback: 
\[\begin{tikzpicture}[
	auto, font=\footnotesize,
	baseline=(current bounding box.center)]
	\begin{scope}[xscale=1.8,yscale=1.1]
		\node (n0) at (0,1) {\(\cat{P}\)};
		\node (n1) at (0,0) {\(\Endo{\cat{D}}\)};
		\node (n2) at (1,1) {\(\Endo{\cat{C}}\)};
		\node (n3) at (1,0) {\(\FCat{\cat{D}}{\cat{C}}\)};
		
		\draw[->] (n0) to node[swap] {\(p_1\)} (n1);
		\draw[->] (n0) to node {\(p_2\)} (n2);
		\draw[->] (n1) to node[swap] {\((R\circ-)\)} (n3); 
		\draw[->] (n2) to node {\((-\circ R)\)} (n3);
		
		\node at (.5,.5) {\(\cong\)};
	\end{scope}
	\draw (n0) ++(7pt,-2pt) -- ++(0pt,-5pt) -- ++(-5pt,0pt);
\end{tikzpicture}\]
Then, the projection of $\cat{P}$ on $\Endo{\cat{C}}$ defines the category of endofunctors lifted along $R$. Formally, we define \lifted{R} as the replete image\footnote{A subcategory \cat{D} of \cat{C} is \emph{replete} if for any $f \in \cat{D}$ and $f\cong g$ in the arrow category $\cat{C}^\to$, then $g \in \cat{D}$ or, equivalently, if the inclusion $\cat{D} \to \cat{C}$ is an isofibration The replete image of a functor $F\colon \cat{C} \to \cat{D}$ is the repletion the image of $F$.} of $p_2$: 
\[
	\lifted{R} \cong \underline{\img}(p_2)
	\text{.}
\]
This definition extends to the \cat{V}-enriched setting as it is.
In particular, for $R$ a \cat{V}-functor, \lifted{R} is the sub-$\cat{V}$-category of $\Endo{\cat{C}}$ formed by all
\begin{itemize}
	\item
		$G$ s.t. for some $H \in \Endo{\cat{D}}$, $R \circ H \cong G \circ R$ in $\FCat{\cat{D}}{\cat{C}}$;
	\item
		$g\colon G \to G'$ s.t., for some $h\colon H \to H'$ in $\Endo{\cat{D}}$, $Rh \cong gR$ in $\FCat{\cat{D}}{\cat{C}}$	
		 (\ie $\psi\circ Rh\circ\phi = g_R$ for $\phi\colon R\circ H \cong G\circ R$ and $\psi\colon R\circ H' \cong G\circ R'$);
	\item
		$g \leq g'$ s.t., for some $h \leq h' \in \Endo{\cat{C}}$, $Rh \cong gR$ and $Rh' \cong g'R$ .
\end{itemize}

In this situation $R$ plays the r\^ole of a mediator between lifted and dropped endofunctors and hence between the behaviours they model. This suggests the following conservative extension of \eqref{eq:hob} to functors of type $\cat{D}\op\times\cat{D} \to \Endo{\cat{C}}$:
\begin{equation}
	\label{eq:hob-lift-fil}
	\begin{cases}
		G \cong \bef{F}(|\nu H|,|\nu H|)\\
		R \circ H \cong G \circ R
	\end{cases}
\end{equation}
where, although the equation systems presents two unknowns, we are actually interested only in $G$ for its values are the endofunctors modelling the systems under scrutiny. However, \eqref{eq:hob-lift-fil} does not offer any correlation between the final coalgebras of $G$ and $H$ that is strong enough for the aims of this work. In fact, for $G$ lifting of $H$ we have that the image through $R$ of the final $H$-coalgebra (canonically) extends to a $G$-coalgebra but not to the final one as illustrated by the diagram:
\begin{equation}
	\label{eq:coalgebra-lift}
	\begin{tikzpicture}[
		auto,font=\footnotesize,
		xscale=2, yscale=1.1,
		baseline=(current bounding box.center)]
			\node (n0) at (0,1) {\(|\nu G|\)};
			\node (n1) at (2,1) {\(G|\nu G|\)};
			\node (n2) at (0,0) {\(R|\nu H|\)};
			\node (n3) at (1,0) {\(RH|\nu H|\)};
			\node (n4) at (2,0) {\(GR|\nu H|\)};
			
			\draw[->] (n0) to 
			node[above] {\(\nu G\)} (n1);
			\draw[->,dashed] (n2) to
				node[left] {\(\psi\)} (n0);
			\draw[->] (n2) to 
				node[above] {\(\nu H\)} (n3);
			\draw[->] (n3) to  
				node[above] {\(\phi_{|\nu H|}\)} (n4);
			\draw[->] (n4) to 
				node[right] {\(G\psi\)}  (n1);
	\end{tikzpicture}
\end{equation}
where $\phi\colon R\circ H \cong G\circ R$ is the natural isomorphism witnessing $G$ as a lifting of $H$. To this end, we need to assume $\psi\colon R|\nu H| \to |\nu G|$ from above to be an isomorphism or, equivalently, $R|\nu H| \cong |\nu G|$ in $\cat{C}$. In such case we say that final invariants lift along the mediator $R$. Formally:

\begin{definition}
	\label{def:final-invariants-lift}
	For $R \colon \cat{D} \to \cat{C}$ 
	we say that \emph{final invariants lift along $R$}
	whenever 
	\[\begin{tikzpicture}[
		auto, font=\footnotesize,
		baseline=(current bounding box.center)]
		\begin{scope}[rotate=45]
			\begin{scope}[scale=1.2]
			\node (n0) at (0.0, 1.0) {\(\cat{P}\)};
			\node (n1) at (0.0, 0.0) {\(\Endo{\cat{D}}\)};
			\node (n2) at (1.0, 1.0) {\(\Endo{\cat{C}}\)};
			\node (n3) at (1.0, 0.0) {\(\FCat{\cat{D}}{\cat{C}}\)};
			\node (n4) at (1.4,-1.4) {\(\cat{D}\)};
			\node (n5) at (2.4,-0.4) {\(\cat{C}\)};
			
			\draw[->] (n0) -- (n1);
			\draw[->] (n0) -- (n2);
			\draw[->] (n1) to node[pos=1,swap] {\((R\circ-)\)} (n3); 
			\draw[->] (n2) to node[pos=1] {\((-\circ R)\)} (n3);
			
			\draw[->] (n1) to node[swap] {\(|\nu-|\)} (n4);
			\draw[->] (n2) to node {\(|\nu-|\)} (n5);
			\draw[->] (n4) to node[swap] {\(R\)} (n5);
			
			\node at (.5,.5) {\(\cong\)};
			\node at (1.6,-.6) {\(\cong\)};
		\end{scope}
		\draw (n0) ++(7pt,-2pt) -- ++(0pt,-5pt) -- ++(-5pt,0pt);
		\end{scope}
	\end{tikzpicture}\]
\end{definition}

Under such conditions, $R$ mediates all abstract behaviours between lifted and dropped endofunctors thus providing the bridge between the category \cat{C} (where systems are modelled) and the category \cat{D} (where parameters range) required to capture the fact that higher-order behaviours operate on behaviours of the same kind. This is captured by extending \eqref{eq:hob-lift-fil} as follows:
\begin{equation}
	\label{eq:hob-lift}
	\begin{cases}
		G \cong \bef{F}(|\nu H|,|\nu H|) \\
		R \circ H \cong G \circ R \\
		R |\nu H| \cong |\nu G|
	\end{cases}
\end{equation}
Then, \cref{def:hob} generalises to this setting as follows:
\begin{definition}
	\label{def:hob-lift}
	For ${\bef{F}\colon \cat{D}\op \times\cat{D} \to \Endo{\cat{C}}}$, $G\colon \cat{C} \to \cat{C}$ is said to characterise higher-order behaviours in $\bef{F}$ if, and only if, there is $H\colon \cat{D} \to \cat{D}$ such that they are solutions to \eqref{eq:hob-lift}.
\end{definition}

Note that \eqref{eq:hob-lift-fil} coincides with \eqref{eq:hob-lift} Under the assumption that final invariants lift along $R$ meaning that we can move some of the information from the equation system to the hypothesis. This simplification is crucial to our aim of obtaining solutions to \eqref{eq:hob-lift} as invariants of suitable endofunctors along the lines of \cref{sec:hob}. Henceforth, we assume $R$ such that final invariants lifts along it. We remark that this holds for any functor preserving final sequences and, in particular, for any right \Cpo-adjoint:
\begin{lemma}
	\label{thm:final-invariants-lift}
	Final invariants lift along right \Cpo-adjoints.
\end{lemma}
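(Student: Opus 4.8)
The plan is to reduce the statement to the standard fact that right \Cpo-adjoints preserve limits, exploiting that a final coalgebra is the limit of its final sequence. Recall that, for a \Cpo-endofunctor $G$ on a \Cpo-coalgebraically cocomplete category, the carrier $|\nu G|$ arises as the limit of the final sequence $1 \leftarrow G1 \leftarrow G^2 1 \leftarrow \cdots$ (continued transfinitely until it stabilises), whose connecting maps are the iterates $G^n(!)$ of the unique map $!\colon G1 \to 1$; the analogous description applies to $H$ on \cat{D}. Being a right \Cpo-adjoint, $R$ preserves \Cpo-limits, in particular terminal objects and the limits computing these sequences. Hence it suffices to show that $R$ carries the final $H$-sequence to the final $G$-sequence, for then transporting the limit along $R$ yields $R|\nu H| = R(\lim_n H^n 1) \cong \lim_n R(H^n 1) \cong \lim_n G^n 1 = |\nu G|$, which is precisely the condition demanded by \cref{def:final-invariants-lift}.

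First I would match the objects of the two sequences. Preservation of the terminal object gives $R1 \cong 1$, and the lifting isomorphism $\phi\colon R\circ H \cong G\circ R$ witnessing $G$ as a lifting of $H$ yields, by an immediate induction, $R(H^{n+1}1) = RH(H^n 1) \cong GR(H^n 1) \cong G(G^n 1) = G^{n+1}1$. Thus $R(H^n 1) \cong G^n 1$ for all $n$, and at limit stages the isomorphism propagates because $R$ preserves the limits that define those stages.

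The crux — and the step requiring the most care — is to upgrade this objectwise agreement to a genuine isomorphism of \emph{diagrams}, \ie to verify that the isomorphisms commute with the connecting maps rather than merely existing componentwise. This I would settle by induction on $n$ using naturality of $\phi$: the base case $R(!)\colon RH1 \to R1$ must agree with the unique map $G1 \to 1$, which holds automatically since $1$ is terminal; and in the inductive step $R(H^n(!)) = R(H(H^{n-1}(!)))$ corresponds, through the naturality square of $\phi$ at the map $H^{n-1}(!)$, to $G(G^{n-1}(!)) = G^n(!)$, the connecting map of the $G$-sequence. Once this coherence is secured the two sequences are isomorphic as diagrams and the limit computation displayed above goes through, completing the proof. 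The enrichment enters only to guarantee that $R$, as a \Cpo-adjoint, preserves the relevant (conical, possibly transfinite) limits; the combinatorial heart of the argument is the naturality bookkeeping that aligns the connecting maps.
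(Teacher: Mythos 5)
Your proof does not follow the paper's route: the paper disposes of this lemma in one line by appeal to Freyd's ``reflective subcategory lemma'', \ie by lifting the adjunction $L \dashv R$ to an adjunction between the categories of $H$-coalgebras and $G$-coalgebras (using $\phi\colon R\circ H\cong G\circ R$ to lift $R$, and the mate of $\phi^{-1}$, a transformation $L\circ G \to H\circ L$, to lift $L$) and then observing that the lifted right adjoint preserves terminal objects, hence carries $\nu H$ to a final $G$-coalgebra. Your alternative via final sequences is the one the paper gestures at informally just before the lemma (``any functor preserving final sequences''), and the part you single out as the crux --- the induction showing that the componentwise isomorphisms $R(H^n1)\cong G^n1$ commute with the connecting maps --- is correct as far as it goes.

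The genuine gap is earlier, in the premise that $|\nu G|$ ``arises as the limit of the final sequence $1\leftarrow G1\leftarrow G^21\leftarrow\cdots$, continued transfinitely until it stabilises''. Nothing in the hypotheses guarantees this: existence of a final coalgebra does not imply convergence of the final sequence, and \cref{sec:hob-generalised} invokes the lemma precisely when \cat{C} is \emph{not} assumed \Cpo-algebraically compact (in the running example $\cat{C}=\Cpo$, which is not \Cpob-enriched, so the connecting maps $G^n(!)$ need not be projections and the limit--colimit coincidence that would force convergence at $\omega$ is unavailable on the \cat{C} side). Even granting that both sequences converge somewhere, your objectwise matching $R(H^\alpha 1)\cong G^\alpha 1$ tacitly requires them to converge at the same ordinal. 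So the argument assumes a description of $|\nu G|$ exactly where the lemma has its content. The adjunction-lifting argument avoids this entirely: it needs no description of $|\nu G|$, since it exhibits $R|\nu H|$, equipped with $\phi_{|\nu H|}\circ R(\nu H)$, directly as a final $G$-coalgebra --- which moreover shows that the canonical comparison $\psi$ of \eqref{eq:coalgebra-lift} is itself the isomorphism, a strictly stronger conclusion than the mere existence of an isomorphism of carriers that your limit computation delivers.
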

\begin{proof}
	The statement can be proved along the lines of	Peter J.~Freyd's ``reflective subcategory lemma'' \cite{freyd:ct1991}.
\end{proof}

Assume, for the argument sake, that there exists an assignment mapping each $G \in \lifted{R}$ to a chosen $\overline{G} \in \Endo{\cat{D}}$ lifting to $G$. Then, we can reformulate \eqref{eq:hob-lift-fil} as follows:
\begin{equation}
	\label{eq:hob-lift-chosen}
	G \cong \bef{F}(|\nu \overline{G}|,|\nu \overline{G}|)
	\text{.}
\end{equation}
Since both \eqref{eq:hob-lift} and \eqref{eq:hob-lift-fil} impose solutions to satisfy the lifting condition we can restrict, without loss of generality, to functors of type $\cat{D}\op\times\cat{D} \to \lifted{R}$. Assume that above assignment extends to a functor $(\overline{-})\colon \lifted{R} \to \Endo{\cat{D}}$, then we can obtain solutions to \eqref{eq:hob-lift-chosen} as invariants by applying the approach described in \cref{sec:hob-solutions} to the functor:
\begin{equation}
	\label{eq:f-chosen-drop}
	\cat{D}\op\times\cat{D}
	\xrightarrow{\bef{F}}
	\lifted{R}
	\xrightarrow{(\overline{-})}
	\Endo{\cat{D}}
	\text{.}
\end{equation}
In general, there might be no (functorial) assignment $(\overline{-})$ but there might be several as well. In the latter case, we remark that they are all equivalent under the assumption that final invariants lift along the mediating functor. In fact, the following holds:
\[\begin{tikzpicture}[
	auto, font=\footnotesize,
	baseline=(current bounding box.center)]
	\begin{scope}[xscale=1.8,yscale=1.1]
		\node (n0) at (0,1) {\(\lifted{R}\)};
		\node (n1) at (0,0) {\(\Endo{\cat{D}}\)};
		\node (n2) at (1,1) {\(\Endo{\cat{C}}\)};
		\node (n3) at (1,0) {\(\FCat{\cat{C}}{\cat{D}}\)};
		
		\draw[->] (n0) to node {I} (n2);
		\draw[->] (n0) to node[swap] {\((\overline{-})\)} (n1);
		\draw[->] (n2) to node {\((-\circ R)\)} (n3);
		\draw[->] (n1) to node[swap] {\((R\circ -)\)} (n3);
		
		\node at (.5,.5) {\(\cong\)};
	\end{scope}
\end{tikzpicture}
\implies
\begin{tikzpicture}[
	auto, xscale=1.8,yscale=1.1, font=\footnotesize,
	baseline=(current bounding box.center)]
	\node (n0) at (0,1) {\(\lifted{R}\)};
	\node (n1) at (0,0) {\(\Endo{\cat{D}}\)};
	\node (n2) at (1,1) {\(\cat{C}\)};
	\node (n3) at (1,0) {\(\cat{D}\)};
	
	\draw[->] (n0) to node {\(|\nu-|\)} (n2);
	\draw[->] (n0) to node[swap] {\((\overline{-})\)} (n1);
	\draw[->] (n1) to node[swap] {\(|\nu-|\)} (n3);
	\draw[->] (n3) to node[swap] {\(R\)} (n2);
	
	\node at (.5,.5) {\(\cong\)};
\end{tikzpicture}
\]
where the inclusion $I$ is given by construction of $\lifted{R}$.
If $R$ is 2-monic\footnote{A morphism $f$ in a 2-category is said to be 2-monic provided that $f\circ g \cong f \circ h \implies g \cong h$.}, then any endofunctor in $\lifted{R}$ is the lifting of a unique (up-to isomorphism) endofunctor:
\begin{lemma}
	\label{thm:chosen-drop}
	For $R$ 2-monic, the diagram below commutes:
	\[\begin{tikzpicture}[
		auto, font=\footnotesize,
		baseline=(current bounding box.center)]
		\begin{scope}[xscale=1.8,yscale=1.1]
			\node (n0) at (0,1) {\(\lifted{R}\)};
			\node (n1) at (1,1) {\(\Endo{\cat{C}}\)};
			\node (n2) at (0,0) {\(\Endo{\cat{D}}\)};
			\node (n3) at (1,0) {\(\FCat{\cat{D}}{\cat{C}}\)};
			
			\draw[->] (n0) -- (n1);
			\draw[dashed,->] (n0) to node[swap] {\((\overline{-})\)} (n2);
			\draw[->] (n1) to node {\((-\circ R)\)} (n3);
			\draw[->] (n2) to node[swap] {\((R\circ-)\)} (n3);
			
			\node at (.5,.5) {\(\cong\)};
		\end{scope}
	\end{tikzpicture}\]\par
\end{lemma}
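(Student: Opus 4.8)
The plan is to construct the dashed functor $(\overline{-})\colon \lifted{R} \to \Endo{\cat{D}}$ explicitly and then observe that the square commuting is nothing but the lifting relation it carries: at an object $G$ the two composites are $G\circ R$ and $R\circ\overline{G}$, so commutativity up to isomorphism is exactly $R\circ\overline{G}\cong G\circ R$. Hence the genuine work is not the commutativity but turning the a priori merely existential assignment $G\mapsto\overline{G}$ into a functor, and it is precisely here that 2-monicity of $R$ is used.

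First I would fix the action on objects. For $G\in\lifted{R}$ the defining property of $\lifted{R}$ supplies some $H\in\Endo{\cat{D}}$ together with an isomorphism $\phi_G\colon R\circ H\cong G\circ R$; I set $\overline{G}\defeq H$ and keep $\phi_G$ as the witnessing coherence. This choice is essentially forced: if $H'$ also satisfies $R\circ H'\cong G\circ R$, then $R\circ H\cong R\circ H'$, whence 2-monicity of $R$ gives $H\cong H'$. Thus $\overline{G}$ is determined up to isomorphism, which is exactly the uniqueness statement announced just before the lemma.

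Next I would treat morphisms. Given $g\colon G\To G'$ in $\lifted{R}$, the morphism clause in the definition of $\lifted{R}$ provides some $h\colon\overline{G}\To\overline{G'}$ with $\phi_{G'}\circ Rh=gR\circ\phi_G$, and I put $\overline{g}\defeq h$. Independence of the choice of $h$ reduces to the remark that, the $\phi$'s being invertible, this coherence equation pins down the whiskering $Rh$ uniquely, so any two admissible lifts of $g$ agree after applying $R$. Functoriality then follows formally: $\id_G$ is lifted by $\id_{\overline{G}}$, and both $\overline{g'\circ g}$ and $\overline{g'}\circ\overline{g}$ lift $g'\circ g$ because $R(\overline{g'}\circ\overline{g})=R\overline{g'}\circ R\overline{g}$ matches $(g'\circ g)R$ under the $\phi$'s. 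Finally, the family $\{\phi_G\}_G$ assembles into the natural isomorphism $(-\circ R)\circ I\cong(R\circ-)\circ(\overline{-})$, its naturality squares being exactly the morphism-coherence equations; this is the asserted commutativity.

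The hard part will be Step three, \ie passing from ``$Rh$ is determined'' to ``$h$ is determined''. Object 2-monicity only reflects isomorphisms of $1$-cells and does \emph{not} by itself force whiskering by $R$ to be faithful on $2$-cells, so strict on-the-nose uniqueness of $\overline{g}$ need not hold. I would resolve this by working up to coherent isomorphism throughout: reading ``$Rh\cong gR$'' as an isomorphism in the arrow category of $\FCat{\cat{D}}{\cat{C}}$, so that $\overline{g}$ is well-defined up to a canonical iso and $(\overline{-})$ is a pseudofunctor; the two functoriality identities above then hold up to these canonical isomorphisms, which is all that the $2$-commuting square requires. This is coherent with the rest of the development, where drops, final invariants, and the solutions produced by \cref{thm:solutions} are in any case only ever pinned down up to isomorphism.
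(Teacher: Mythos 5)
Your overall reading of the lemma is right---the content is the construction of $(\overline{-})$, and commutativity of the square is then just the family of witnessing isomorphisms $\phi_G\colon R\circ\overline{G}\cong G\circ R$---but the paper takes a different and slicker route that avoids exactly the coherence problem you end up conceding. The paper's proof never chooses a lift $H$ for each $G$ and then checks functoriality; it observes that $(R\circ-)\colon\Endo{\cat{D}}\to\FCat{\cat{D}}{\cat{C}}$ is 2-monic because $R$ is, that 2-monics are stable under the 2-pullback defining $\cat{P}$, so $p_2\colon\cat{P}\to\Endo{\cat{C}}$ is 2-monic and hence identifies $\cat{P}$ with its replete image $\lifted{R}$; the desired $(\overline{-})$ is then simply the other projection $p_1$. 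The point is that in $\cat{P}$ the witness $(H,\phi)$ is part of the \emph{object} and the compatible 2-cell $h$ is part of the \emph{morphism}, so no choice of lifts of morphisms is ever made, the coherence equations you verify by hand hold by construction of the pullback, and $(\overline{-})$ comes out as an honest (\Cpo-)functor rather than a pseudofunctor.

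This matters for your last paragraph, which is where your proposal genuinely falls short of the statement. Retreating to ``$(\overline{-})$ is a pseudofunctor, well-defined up to canonical iso'' is not enough for how the lemma is used: \cref{thm:solutions-lift} forms $\overline{\bef{F}}=(\overline{-})\circ\bef{F}$ and feeds it to \cref{thm:solutions}, which requires a genuine \Cpo-enriched functor $\cat{D}\op\times\cat{D}\to\Endo{\cat{D}}$. The difficulty you correctly isolate---that $\phi_{G'}\circ Rh=gR\circ\phi_G$ pins down $Rh$ but not $h$ unless whiskering by $R$ is faithful on 2-cells---is real, but the way out is not to weaken the conclusion; it is to reorganize the argument around $\cat{P}$ as above, where faithfulness of $p_2$ (equivalently, uniqueness of the $h$ paired with a given $g$) is exactly what the 2-monicity hypothesis on $(R\circ-)$ is invoked to supply, and the remaining data of an inverse to the equivalence $p_2$ is an ordinary choice on objects only. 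So: same key idea, genuinely different packaging, and the packaging is what turns your ``pseudofunctor up to coherent iso'' into the functor the rest of \cref{sec:hob-lifted} actually needs.
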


\begin{proof}
	The functor $(R\circ -)$ is 2-monic since, by hypothesis, $R$ is so. By construction, $p_2\colon \cat{P} \to \Endo{\cat{C}}$ is 2-monic and thus $\cat{P}$ is $\lifted{R}$. The desired $(\overline{-})$ is $p_1$.
\end{proof}

Therefore, \eqref{eq:hob-lift-fil} can be solved by applying the techniques presented in \cref{sec:hob} to $\overline{\bef{F}}\colon \cat{D}\op \times \cat{D} \to \Endo{\cat{D}}$.
\begin{theorem}
	\label{thm:solutions-lift}
	Assume \cat{D} \Cpo-algebraically compact, $R \colon \cat{D} \to \cat{C}$ 2-monic in $\VCat[\Cpo]$, and final invariants to lift along $R$.
	For any \Cpo-functor $\bef{F}\colon \cat{D}\op\times\cat{D} \to \lifted{R}$ the equation system \eqref{eq:hob-lift-fil} admits a unique dominating solution.
\end{theorem}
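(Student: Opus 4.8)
The plan is to reduce the mixed-category system \eqref{eq:hob-lift-fil} to the single-category situation already settled by \cref{thm:solutions}, by pushing everything down to \cat{D} along the chosen-drop functor and then lifting the resulting solution back to \cat{C} through $R$. Since $R$ is 2-monic, \cref{thm:chosen-drop} supplies a \Cpo-enriched drop functor $(\overline{-})\colon \lifted{R} \to \Endo{\cat{D}}$ with $R\circ\overline{G} \cong G\circ R$ for every $G\in\lifted{R}$. Composing it with \bef{F} yields exactly the functor \eqref{eq:f-chosen-drop}, $\overline{\bef{F}} \defeq (\overline{-})\circ\bef{F}\colon \cat{D}\op\times\cat{D} \to \Endo{\cat{D}}$, which is \Cpo-enriched as a composite of \Cpo-functors. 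This is the object to which I would apply the machinery of \cref{sec:hob-solutions}.

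Concretely, because \cat{D} is \Cpo-algebraically compact, \cref{thm:solutions} gives a unique (up-to isomorphism) dominating $H\in\Endo{\cat{D}}$ with $H \cong \overline{\bef{F}}(|\nu H|,|\nu H|)$. I then set $G \defeq \bef{F}(|\nu H|,|\nu H|)\in\lifted{R}$ and check that $(G,H)$ solves \eqref{eq:hob-lift-fil}: the first equation holds by definition of $G$, while for the second I compute $H \cong \overline{\bef{F}}(|\nu H|,|\nu H|) = \overline{\bef{F}(|\nu H|,|\nu H|)} = \overline{G}$, so $R\circ H \cong R\circ\overline{G} \cong G\circ R$ by the defining property of the drop. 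This establishes existence.

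For uniqueness and domination I would set up a correspondence between solutions of \eqref{eq:hob-lift-fil} and solutions of the dropped equation $H \cong \overline{\bef{F}}(|\nu H|,|\nu H|)$, compatible with the coreflection order. The construction above is the forward direction; conversely, any solution $(G',H')$ of \eqref{eq:hob-lift-fil} has $H'$ a drop of $G'$, so 2-monicity (\cref{thm:chosen-drop}) forces $H' \cong \overline{G'} \cong \overline{\bef{F}}(|\nu H'|,|\nu H'|)$, i.e. $H'$ solves the dropped equation. Transporting the unique dominating $H$ of \cref{thm:solutions} through this correspondence yields the unique dominating $G = \bef{F}(|\nu H|,|\nu H|)$.

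The delicate step, and the one I expect to be the main obstacle, is showing that this correspondence \emph{preserves domination}. A domination of $H'$ by $H$ is a coreflection $H \lhd H'$ factoring the canonical initial-to-final isomorphism; applying the \Cpo-functor $|\nu-|$ (\cref{thm:cpo-final}) gives a coreflection $|\nu H| \lhd |\nu H'|$ in \cat{D}, which, fed diagonally into $\cat{D}\op\times\cat{D}$ (projection in the contravariant slot, embedding in the covariant one, exactly as in the symmetric construction $\breve{\bef{F}_\nu}$ of \cref{sec:hob-solutions}), is carried by the \Cpo-functor \bef{F} to a coreflection $G \lhd G'$. The point requiring care is verifying that this transport respects the mixed variance of \bef{F} and recovers the genuine domination relation on the \cat{C}-side rather than merely some coreflection. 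Finally, I would record that the standing hypothesis that final invariants lift along $R$ (\cref{def:final-invariants-lift}, guaranteed for right \Cpo-adjoints by \cref{thm:final-invariants-lift}) forces $R|\nu H| \cong |\nu G|$, so the solution also satisfies \eqref{eq:hob-lift} and hence characterises higher-order behaviours in the sense of \cref{def:hob-lift}, while also making the result independent of the particular choice of drop.
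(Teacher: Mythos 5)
Your proposal is correct and follows exactly the paper's own (one-line) argument: use \cref{thm:chosen-drop} to obtain $(\overline{-})$ from 2-monicity of $R$, form $\overline{\bef{F}} = (\overline{-})\circ\bef{F}$, and apply \cref{thm:solutions} over the \Cpo-algebraically compact category \cat{D}. The additional details you supply --- the explicit back-and-forth correspondence between solutions of \eqref{eq:hob-lift-fil} and of the dropped equation, and the transport of domination through $|\nu-|$ and the mixed-variance slots of $\bef{F}$ --- are exactly what the paper leaves implicit, so there is nothing to correct.
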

\begin{proof}
	Apply \cref{thm:solutions} to $\overline{\bef{F}}\colon \cat{D}\op \times \cat{D} \to \Endo{\cat{D}}$ which exists by \cref{thm:chosen-drop}.
\end{proof}

\paragraph{Example: lazy deterministic behaviours}
Let \cat{C} and \cat{D} be \Cpo and \Cpob, respectively, and let $R\colon \cat{D} \to \cat{C}$ be the inclusion $\Cpob \to \Cpo$ whose left \Cpo-adjoint is the lifting functor $(-)_\perp\colon \Cpo \to \Cpob$.

The endofunctor $((V \to \Id) + W)_\perp$ over \Cpob lifts to \Cpo along $R$ as it is. The outer lifting functor $(-)_\perp$ creates a bottom associated with the system being stuck. The function space functor $(V \to \Id)$ describes lazy inputs since these functions are not strict. Finally, the constant functor $W$ models behaviours terminating producing an output in a lazy fashion since the bottom element of $W$ is distinct from the one provided by $(-)_\perp$. Finally, the assignment $(V,W) \mapsto ((V \to \Id) + W)_\perp$ determines a \Cpo-functor $\bef{F}\colon \Cpob\op\times\Cpob \to \lifted{R}$ allowing us to apply \cref{thm:solutions-lift} and obtain the dominating endofunctor among those in $\bef{F}$ that model lazy deterministic higher-order behaviours.

\subsection{Families of dropped endofunctors}
\label{sec:hob-dropped}

In this section we assume the same settings of \cref{sec:hob-lifted} except for $R$ going in the opposite direction. Because the mediating functor $R$ is reversed, we have to consider dropped endofunctors have and ``symmetrise'' all constructions presented in \cref{sec:hob-lifted}. As above, we define endofunctors modelling higher-order behaviours by a conservative extension of \cref{def:hob} and show that such endofunctors exists and can be computed as limits of sequences akin to those we used so far.

The category $\dropped{R}$ of endofunctors \emph{dropped} along $R\colon\cat{C} \to \cat{D}$ is defined, symmetrically to $\lifted{R}$, as the replete image of the projection $p_1\colon \cat{P}\to \Endo{\cat{C}}$ from the 2-pullback:
\[\begin{tikzpicture}[
	auto, font=\footnotesize,
	baseline=(current bounding box.center)]
	\begin{scope}[xscale=1.8,yscale=1.1]
		\node (n0) at (0,1) {\(\cat{P}\)};
		\node (n1) at (0,0) {\(\Endo{\cat{C}}\)};
		\node (n2) at (1,1) {\(\Endo{\cat{D}}\)};
		\node (n3) at (1,0) {\(\FCat{\cat{C}}{\cat{D}}\)};
		
		\draw[->] (n0) to node[swap] {\(p_1\)} (n1);
		\draw[->] (n0) to node {\(p_2\)} (n2);
		\draw[->] (n1) to node[swap] {\((R\circ-)\)} (n3);
		\draw[->] (n2) to node {\((-\circ R)\)} (n3);
		
		\node at (.5,.5) {\(\cong\)};
	\end{scope}
	\draw (n0) ++(7pt,-2pt) -- ++(0pt,-5pt) -- ++(-5pt,0pt);
\end{tikzpicture}\]
This definition extends to the \cat{V}-enriched setting as it is.
In particular, for $R$ a \cat{V}-functor, $\dropped{R}$ is the sub-$\cat{V}$-category of $\Endo{\cat{C}}$ formed by all
\begin{itemize}
	\item
		$H$ s.t. for some $G \in \Endo{\cat{D}}$, $R \circ H \cong G \circ R$ in $\FCat{\cat{C}}{\cat{D}}$;
	\item
		$h\colon H \to H'$ s.t., for some $g\colon G \to G'$ in $\Endo{\cat{D}}$, $Rh \cong gR$ in $\FCat{\cat{C}}{\cat{D}}$	
		 (\ie $\psi\circ Rh\circ\phi = g_R$ for $\phi\colon R\circ H \cong G\circ R$ and $\psi\colon R\circ H' \cong G\circ R'$);
	\item
		$h \leq h'$ s.t., for some $g \leq g' \in \Endo{\cat{D}}$, $Rh \cong gR$ and $Rh' \cong g'R$ .
\end{itemize}

The functor $R$ has to mediate all abstract behaviours between lifted and dropped endofunctors in the sense that, for $G$ lifting of $H$, the image through $R$ of the final $G$-coalgebra (canonically) extends to the final $G$-coalgebra (see \eqref{eq:coalgebra-lift}). Formally: 
\begin{definition}
	\label{def:final-invariants-drop}
	For $R \colon \cat{C} \to \cat{D}$ 
	we say that \emph{final invariants drop along $R$}
	whenever 
\[\begin{tikzpicture}[
	auto, font=\footnotesize,
	baseline=(current bounding box.center)]
	\begin{scope}[rotate=45]
		\begin{scope}[scale=1.2]
		\node (n0) at (0.0, 1.0) {\(\cat{P}\)};
		\node (n1) at (0.0, 0.0) {\(\Endo{\cat{C}}\)};
		\node (n2) at (1.0, 1.0) {\(\Endo{\cat{D}}\)};
		\node (n3) at (1.0, 0.0) {\(\FCat{\cat{D}}{\cat{C}}\)};
		\node (n4) at (1.4,-1.4) {\(\cat{C}\)};
		\node (n5) at (2.4,-0.4) {\(\cat{D}\)};
		
		\draw[->] (n0) -- (n1);
		\draw[->] (n0) -- (n2);
		\draw[->] (n1) to node[pos=1,swap] {\((R\circ-)\)} (n3); 
		\draw[->] (n2) to node[pos=1] {\((-\circ R)\)} (n3);
		
		\draw[->] (n1) to node[swap] {\(|\nu-|\)} (n4);
		\draw[->] (n2) to node {\(|\nu-|\)} (n5);
		\draw[->] (n4) to node[swap] {\(R\)} (n5);
		
		\node at (.5,.5) {\(\cong\)};
		\node at (1.6,-.6) {\(\cong\)};
	\end{scope}
	\draw (n0) ++(7pt,-2pt) -- ++(0pt,-5pt) -- ++(-5pt,0pt);
	\end{scope}
\end{tikzpicture}\]
\end{definition}

Final invariants drop along functors that preserve final sequences and, in particular, along any right \Cpo-adjoint:
\begin{lemma}
	\label{thm:final-invariants-drop}
	Final invariants drop along right \Cpo-adjoints.
\end{lemma}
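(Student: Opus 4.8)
The plan is to reduce the statement to the preservation of limits by a right adjoint. Fix \(R\colon \cat{C}\to\cat{D}\) a right \Cpo-adjoint and let \(H\in\Endo{\cat{C}}\) be dropped along \(R\) with witness \(G\in\Endo{\cat{D}}\); that is, fix a natural isomorphism \(\phi\colon R\circ H \cong G\circ R\). Unwinding \cref{def:final-invariants-drop}, what must be produced is an isomorphism \(R|\nu H| \cong |\nu G|\), natural in the object of \cat{P} being transported and coherent with the two copies of \(|\nu-|\) supplied by \cref{thm:cpo-final}. First I would recall that, in the ambient \Cpo-enriched setting, both final invariants are realised as limits of their final sequences: \(|\nu H| \cong \lim_{n<\omega} H^n 1\) in \cat{C} and \(|\nu G| \cong \lim_{n<\omega} G^n 1\) in \cat{D}, the connecting maps being the iterated projections towards the terminal object, convergence at stage \(\omega\) being guaranteed by the limit--colimit coincidence recalled in \cref{sec:algebraic-compactness}.

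The core step is to show that \(R\) carries the final \(H\)-sequence to the final \(G\)-sequence up to a cone isomorphism. Being a right \Cpo-adjoint, \(R\) preserves \Cpo-limits, in particular the terminal object, so \(R1 \cong 1\); then an easy induction using \(\phi\) gives
\[
	R(H^{n+1}1) = RH(H^n 1) \cong GR(H^n 1) \cong G(G^n 1) = G^{n+1}1,
\]
the penultimate isomorphism being the inductive hypothesis hit by \(G\). Naturality of \(\phi\) forces these component isomorphisms to commute with the connecting projections, so they assemble into an isomorphism of \(\omega\op\)-diagrams \(R(H^{\bullet}1)\cong G^{\bullet}1\) in \cat{D}.

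Finally, since a right \Cpo-adjoint preserves the limiting cone of this \(\omega\op\)-chain, I would conclude
\[
	R|\nu H| \cong R\bigl(\lim_{n<\omega} H^n 1\bigr) \cong \lim_{n<\omega} R(H^n 1) \cong \lim_{n<\omega} G^n 1 \cong |\nu G|,
\]
which is exactly the commutativity required by \cref{def:final-invariants-drop}. Observe that, the reversed direction of \(R\) notwithstanding, this is verbatim the argument behind \cref{thm:final-invariants-lift}: the implication ``\(R\circ H\cong G\circ R\) entails \(R|\nu H|\cong|\nu G|\)'' uses only that \(R\) is a limit-preserving \Cpo-functor intertwining \(H\) and \(G\), a property insensitive to whether \(R\) runs from \cat{C} to \cat{D} or the other way; so one may equally invoke the lift lemma with the roles swapped.

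The hard part is not the object-level computation but the coherence bookkeeping. I expect the main obstacle to be verifying that the family \(R(H^n 1)\cong G^n 1\) is genuinely a morphism of cones, so that the induced comparison on limits is the canonical one and the resulting isomorphism \(R|\nu H|\cong|\nu G|\) is natural in \cat{P} (and thus equips the whole diagram of \cref{def:final-invariants-drop} with a \emph{natural} filler rather than a mere pointwise one). This is where naturality of \(\phi\), functoriality of \(|\nu-|\) from \cref{thm:cpo-final}, and the coherence built into the limit--colimit coincidence must be combined carefully; once that is in place, preservation of limits by the right adjoint closes the argument immediately.
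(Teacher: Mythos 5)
Your strategy---realise both final invariants as $\omega$-limits of final sequences and then use preservation of limits by the right adjoint---is not the one the paper takes, and it has a genuine gap. The paper disposes of \cref{thm:final-invariants-drop} in one line by adapting Freyd's ``reflective subcategory lemma'': the adjunction $L \dashv \joinrel R$ together with the intertwining isomorphism $R\circ H \cong G\circ R$ (and its mate $L\circ G \to H\circ L$) lifts to an adjunction between the category of $H$-coalgebras and the category of $G$-coalgebras; the lifted right adjoint then preserves terminal objects, \ie sends the final $H$-coalgebra to the final $G$-coalgebra, which is exactly the isomorphism $R|\nu H|\cong|\nu G|$ required by \cref{def:final-invariants-drop}. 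That route uses only the \emph{existence} of the final coalgebras---which is all that \cref{thm:cpo-final} and \cref{def:final-invariants-drop} presuppose---and says nothing about how they are constructed.

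The gap in your version is the very first step, the identification $|\nu H| \cong \lim_{n<\omega} H^n 1$. Convergence of the final sequence at stage $\omega$ is not a consequence of \Cpo-enrichment or of the mere existence of a final coalgebra: it needs the connecting maps $H^{n+1}1 \to H^n 1$ to form a chain of projections admitting a limiting cone of coreflections, \ie hypotheses of the kind appearing in \cref{thm:cpob-cpo-compactness}---and these are precisely what \cref{sec:hob-generalised} deliberately refrains from assuming about $\cat{C}$. In \cref{thm:solutions-drop} only $\cat{D}$ is \Cpo-algebraically compact, while the dropped endofunctor $H$ lives in $\Endo{\cat{C}}$, so the limit--colimit coincidence you invoke is simply not available on the $\cat{C}$ side. (Even \Set, trivially \Cpo-enriched, already shows the phenomenon: $\mathcal{P}_{\!\omega}$ has a final coalgebra that is not the limit of its final $\omega$-sequence.) Your inductive computation $R(H^{n+1}1)\cong G^{n+1}1$ and the appeal to preservation of limits by a right adjoint are fine as far as they go; what is missing is any reason why those limits compute $|\nu H|$ and $|\nu G|$. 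Replacing the final-sequence description by the lifted-adjunction argument closes the gap and is, in substance, what the paper's citation of Freyd amounts to. Your closing observation that \cref{thm:final-invariants-lift} and \cref{thm:final-invariants-drop} are the same statement with the r\^oles of the two categories exchanged is, on the other hand, accurate---the paper gives them literally identical proofs.
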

\begin{proof}
	The statement can be proved along the lines of	Peter J.~Freyd's ``reflective subcategory lemma'' \cite{freyd:ct1991}.
\end{proof}

Akin to \eqref{eq:hob-lift}, \eqref{eq:hob} generalises to $\bef{F}\colon\cat{D}\op \times \cat{D} \to \Endo{\cat{C}}$ as:
\begin{equation}
	\label{eq:hob-drop}
	\begin{cases}
		H \cong \bef{F}(|\nu G|,|\nu G|)\\
		R \circ H \cong G \circ R\\
		R |\nu H| \cong  |\nu G|
	\end{cases}
\end{equation}
where we are actually interested only in the unknown $H$ for its values are the endofunctors modelling the systems under scrutiny. 
\Cref{def:hob} generalises to this setting as follows:
\begin{definition}
	\label{def:hob-drop}
	For ${\bef{F}\colon \cat{D}\op \times\cat{D} \to \dropped{R}}$, an endofunctor is said to characterise higher-order behaviours in $\bef{F}$ if, and only if, it is solution to \eqref{eq:hob-drop}.	
\end{definition}

When final invariants drop along $R$ and ($\bef{F}$ restricts to $\dropped{R}$), \eqref{eq:hob-drop} can be reformulated in the unknown $H$ alone yielding:
\begin{equation}
	\label{eq:hob-drop-fid}
	H \cong \bef{F}(R|\nu H|,R|\nu H|)
	\text{.}
\end{equation}
This equation determines the functor
\begin{equation}
	\label{eq:hob-drop-fid-functorial}
	\cat{C}\op \times \cat{C} 
	\xrightarrow{R\op \times R\,} 
	\cat{D}\op \times \cat{D}
	\xrightarrow{\bef{F}\,}
	\Endo{\cat{C}} 
\end{equation}
and, although its type is the one considered in \cref{sec:hob}, we cannot apply \cref{thm:solutions} since here we do not assume \cat{C} \Cpo-algebraically compact. Therefore, we need to capture \eqref{eq:hob-drop} in terms of endofunctors over $\cat{D}$.

Akin \cref{sec:hob-lifted}, given a functorial assignment choosing liftings for endofunctors dropped along $R$, we reformulate \eqref{eq:hob-drop} as:
\[
	\begin{cases}
		H \cong \bef{F}(|\nu \overline{H}|,|\nu \overline{H}|)\\
		R |\nu H| \cong  |\nu \overline{H}|
	\end{cases}
\]
and, since final invariants drop along $R$, as:
\[
	G \cong \overline{\bef{F}}(|\nu G|,|\nu G|)
\]
which in turn provides us with the desired formulation of \eqref{eq:hob-drop} in terms of endofunctors over $\cat{D}$. Solutions can be characterised as invariants of a suitable endofunctor by applying the approach described in \cref{sec:hob-solutions} to the functor:
\[
	\cat{D}\op \times \cat{D}
	\xrightarrow{\bef{F}\,}
	\dropped{R}
	\xrightarrow{(\overline{-})\,}
	\Endo{\cat{D}}
	\text{.} 
\]
All considerations from \cref{sec:hob-lifted} about the existence of an assignment $(\overline{-})$ apply to this setting. In particular, under the dual hypothesis of \cref{thm:chosen-drop}, there is a unique functor $(\overline{-})$:
\begin{lemma}
	\label{thm:chosen-lift}
	For $R$ 2-epic, the diagram below commutes:
	\[\begin{tikzpicture}[
		auto, font=\footnotesize,
		baseline=(current bounding box.center)]
		\begin{scope}[xscale=1.8,yscale=1.1]
			\node (n0) at (0,1) {\(\dropped{R}\)};
			\node (n1) at (0,0) {\(\Endo{\cat{C}}\)};
			\node (n2) at (1,1) {\(\Endo{\cat{D}}\)};
			\node (n3) at (1,0) {\(\FCat{\cat{C}}{\cat{D}}\)};
			
			\draw[->] (n0) to (n1);
			\draw[dashed,->] (n0) to node {\((\overline{-})\)} (n2);
			\draw[->] (n1) to node[swap] {\((R\circ-)\)} (n3);
			\draw[->] (n2) to node {\((-\circ R)\)} (n3);
			
			\node at (.5,.5) {\(\cong\)};
		\end{scope}
	\end{tikzpicture}\]
\end{lemma}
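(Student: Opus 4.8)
The plan is to argue by exact duality with \cref{thm:chosen-drop}. There, the 2-monicity of $R$ was propagated to the postcomposition functor $(R\circ-)$; here I would instead propagate the \emph{2-epicity} of $R$ to the \emph{precomposition} functor $(-\circ R)\colon \Endo{\cat{D}} \to \FCat{\cat{C}}{\cat{D}}$, and then read off the desired functor $(\overline{-})$ as the remaining leg of the defining 2-pullback.

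Concretely, the steps I would carry out are the following. First, I would check that $(-\circ R)$ is 2-monic: by definition $R$ being 2-epic means that $G\circ R\cong G'\circ R$ entails $G\cong G'$ for $G,G'\in\Endo{\cat{D}}$, which is precisely 2-monicity of $(-\circ R)$ on objects, and the analogous cancellation at the level of (natural) 2-cells and of the \Cpo-order gives 2-monicity in the enriched sense. Second, since in the 2-pullback defining $\dropped{R}$ the leg $p_1\colon\cat{P}\to\Endo{\cat{C}}$ is the pullback of $(-\circ R)$ along $(R\circ-)$, and 2-monics are stable under pullback, I would conclude that $p_1$ is itself 2-monic. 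Third, a 2-monic functor is an equivalence onto its replete image, so this exhibits $\cat{P}$ as $\dropped{R}$ compatibly with the inclusion into $\Endo{\cat{C}}$; under this identification I would take $(\overline{-})$ to be $p_2\colon\cat{P}\to\Endo{\cat{D}}$. Finally, the commutativity up to isomorphism asserted by the lemma's square is inherited verbatim from the commutativity of the defining 2-pullback square, namely $(R\circ-)\circ p_1\cong(-\circ R)\circ p_2$.

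The main obstacle I anticipate is not conceptual but bookkeeping in the 2-categorical enriched setting: one has to verify that 2-epicity of $R$ yields cancellation not merely on objects but uniformly on 2-cells and on the order-enrichment, so that $(-\circ R)$ is genuinely 2-monic as a \Cpo-functor, and that the repletion used in the definition of $\dropped{R}$ makes the replete image of the 2-monic $p_1$ an honest equivalence respecting the isofibration into $\Endo{\cat{C}}$. These are exactly the verifications performed, dually, in \cref{thm:chosen-drop}, so they transfer \mutatismutandis.
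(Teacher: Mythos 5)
Your proposal is correct and follows essentially the same route as the paper's own proof: establish that $(-\circ R)$ is 2-monic from the 2-epicity of $R$, deduce that the pullback leg $p_1\colon\cat{P}\to\Endo{\cat{C}}$ is 2-monic so that $\cat{P}$ is (equivalent to) $\dropped{R}$, and read off $(\overline{-})$ as the other pullback projection, with commutativity inherited from the defining 2-pullback square --- exactly the dual of \cref{thm:chosen-drop}. The only divergence is that you identify $(\overline{-})$ with $p_2$ (the projection onto $\Endo{\cat{D}}$), which is the correct leg here, whereas the paper's proof writes $p_1$, apparently a leftover from the lifted case.
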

\begin{proof}
	The functor $(-\circ R)$ is 2-monic since, by hypothesis, $R$ is 2-epic. By construction, $p_1\colon \cat{P} \to \Endo{\cat{C}}$ is 2-monic and thus $\cat{P}$ is $\dropped{R}$. The desired $(\overline{-})$ is $p_1$.
\end{proof}

The approach discussed so far corresponds to the upper path of the following diagram:
\[\begin{tikzpicture}[
	auto, xscale=1.8,yscale=1.1, font=\footnotesize,
	baseline=(current bounding box.center)]
	\node (m-1-1) at (0,0) {\(\cat{D}\op\times\cat{D}\)};
	\node (m-1-2) at (1,0) {\(\dropped{R}\)};
	\node (m-1-3) at (2,0) {\(\cat{C}\)};
	\node (m-2-2) at (1,1) {\(\Endo{\cat{D}}\)};
	\node (m-2-3) at (2,1) {\(\cat{D}\)};
	
	\draw[->] (m-1-1) to node[swap] {\(\bef{F}\)} (m-1-2);
	\draw[->] (m-1-2) to node[swap] {\(|\nu-|\)} (m-1-3);
	\draw[->] (m-1-2) to node[] {\((\overline{-})\)} (m-2-2);
	\draw[->] (m-1-3) to node[swap] {\(R\)} (m-2-3);
	\draw[->] (m-2-2) to node {\(|\nu-|\)} (m-2-3);

	\node at (1.5,.5) {\(\cong\)};
\end{tikzpicture}\]
Since the diagram commutes (final invariants drop along $R$ by assumption), any solution obtained applying \cref{thm:solutions} to $\overline{\bef{F}}$ can be determined also as an invariant of the symmetric endofunctor over $\cat{D}\op\times\cat{D}$ induced by the lower path. Crucially, the latter does not assume the existence of $(\overline{-})$.

\begin{theorem}
	\label{thm:solutions-drop}
	Assume \cat{D} \Cpo-algebraically compact and final invariants to drop along $R \colon \cat{C} \to \cat{D}$. For a \Cpo-functor $\bef{F}\colon \cat{D}\op\times\cat{D} \to \dropped{R}$, \eqref{eq:hob-drop-fid} admits a unique (up-to iso) dominating solution.
\end{theorem}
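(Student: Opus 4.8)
The plan is to run the construction of \cref{sec:hob-solutions} over the parameter category \cat{D} rather than over \cat{C}, so that only the algebraic compactness of \cat{D} is exploited. Having already reduced \eqref{eq:hob-drop} to the single equation \eqref{eq:hob-drop-fid} (using that final invariants drop along $R$ and that $\bef{F}$ restricts to $\dropped{R}$), I would first record a reparametrisation: a solution $H$ to \eqref{eq:hob-drop-fid} is completely determined by the object $Z \defeq R|\nu H| \in \cat{D}$. Indeed $H \cong \bef{F}(Z,Z)$, whence $R|\nu H| \cong R|\nu\bef{F}(Z,Z)|$, \ie $Z \cong R|\nu\bef{F}(Z,Z)|$; conversely, any $Z$ with $Z \cong R|\nu\bef{F}(Z,Z)|$ yields a solution by setting $H \defeq \bef{F}(Z,Z)$, since then $R|\nu H| \cong Z$ and hence $H \cong \bef{F}(R|\nu H|,R|\nu H|)$. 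Thus it suffices to produce a dominating fixed point $Z \cong \Psi(Z,Z)$ of the lower-path composite $\Psi$, namely
\[
	\cat{D}\op\times\cat{D} \xrightarrow{\bef{F}} \dropped{R} \xrightarrow{|\nu-|} \cat{C} \xrightarrow{R} \cat{D}
	\text{.}
\]

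Next I would form the symmetric \Cpo-endofunctor $\breve{\Psi} \defeq \langle \Psi\op\circ\gamma, \Psi\rangle$ on $\cat{D}\op\times\cat{D}$, exactly as $\breve{\bef{F}_\nu}$ was built from $\bef{F}_\nu$ in \cref{sec:hob-solutions}; its invariants whose two components are isomorphic are precisely the objects $Z$ sought above. By \cref{cor:compact-product-dual}, $\cat{D}\op\times\cat{D}$ is \Cpo-algebraically compact, and $\breve{\Psi}$ is a \Cpo-endofunctor on it. Crucially, unlike $\breve{\bef{F}_\nu}$ in \cref{thm:solutions}, here $\breve{\Psi}$ is \emph{already} an endofunctor on the compact category $\cat{D}\op\times\cat{D}$, so \cref{thm:factoring-compact} is not needed: algebraic compactness applies directly, giving an initial $\breve{\Psi}$-algebra and a final $\breve{\Psi}$-coalgebra that are canonically isomorphic. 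By the symmetry of $\breve{\Psi}$ this canonical invariant is isomorphic to its own specular, hence has equal components, yielding the desired $Z$; setting $H \defeq \bef{F}(Z,Z)$ gives a solution to \eqref{eq:hob-drop-fid}. Uniqueness up to isomorphism and domination then follow as in the proof of \cref{thm:solutions}: any other invariant factors the canonical initial-to-final isomorphism, which in the \Cpo-enriched setting exhibits a coreflection from the dominating $Z$ to it.

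I expect the main obstacle to be the well-definedness of $\Psi$ as a \Cpo-functor \emph{without} a global lifting assignment $(\overline{-})$---equivalently, without assuming $R$ 2-epic, which is the hypothesis that \cref{thm:solutions-lift} cannot avoid but that this theorem must. This is exactly where the assumption that final invariants drop along $R$ does the work: it guarantees that $|\nu-|$ is defined on the image $\bef{F}(\cat{D}\op\times\cat{D}) \subseteq \dropped{R}$ and that the lower composite $R\circ|\nu-|\circ\bef{F}$ agrees, naturally, with the upper composite $|\nu-|\circ(\overline{-})\circ\bef{F}$ wherever the latter is defined---so that the invariants of $\breve{\Psi}$ faithfully encode the solutions of \eqref{eq:hob-drop-fid}. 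Continuity and enrichment of $\Psi$ then reduce to those of its constituents ($R$ being a \Cpo-functor, $|\nu-|$ restricted to $\dropped{R}$ as in \cref{thm:cpo-final}, and $\bef{F}$ by hypothesis), and the remainder is the established compactness machinery transported to $\cat{D}\op\times\cat{D}$.
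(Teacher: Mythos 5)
Your proposal is correct and follows essentially the same route as the paper: both form the symmetric \Cpo-endofunctor on $\cat{D}\op\times\cat{D}$ induced by the lower-path composite $R\circ|\nu-|\circ\bef{F}$, invoke \cref{cor:compact-product-dual} to apply algebraic compactness directly (without \cref{thm:factoring-compact} or a lifting assignment $(\overline{-})$), and extract $Z$ by symmetry so that $H = \bef{F}(Z,Z)$ is the dominating solution. Your explicit reparametrisation of solutions via $Z = R|\nu H|$ and your discussion of why the ``final invariants drop'' hypothesis replaces the 2-epic assumption are just more detailed renderings of steps the paper leaves implicit.
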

\begin{proof}
	Let $\breve{\bef{F}}$ denote the symmetric \Cpo-enriched endofunctor over $\cat{D}\op \times \cat{D}$ induced by $R\circ|\nu-| \circ \bef{F}\colon\cat{D}\op \times \cat{D} \to \cat{D}$. By \cref{cor:compact-product-dual} is $\cat{D}\op \times \cat{D}$ \Cpo-algebraically compact and hence both the initial $\breve{\bef{F}}$-algebra and final $\breve{\bef{F}}$-coalgebra exists and are isomorphic yielding, by symmetry $Z \in \cat{D}$ s.t.~$H = \bef{F}(Z,Z)$ is the desired dominating solution to \eqref{eq:hob-drop-fid}.
\end{proof}

\section{Conclusions and future work}
\label{sec:conclusions}

In this paper we have showed how to define behaviours of higher-order systems as final coalgebras of suitable behavioural functors. To this end, we had to solve an intrinsic circularity between the definition of these functors, and their own final coalgebra. We have provided a general construction for both the functors and the final coalgebra, based on a limit-colimit coincidence argument.  Thus, the final coalgebra of such a functor is the object of all abstract higher-order behaviours. As a direct application of the theory of coalgebras, we are now able to define a canonical higher-order coalgebraic bisimulation and show it to coincide with coalgebraic bisimulation for value-passing behaviours. This allows us to employ techniques developed for the seconds (\eg \cite{hl:tcs1995}) to the higher-order case.

These results allow us to contextualize higher-order systems within the general framework of coalgebraic semantics, thus bridging the gap between the two worlds.  It is interesting future work to investigate the application of results from coalgebras, to higher-order systems.  In particular, we expect to be able to define general \emph{higher-order trace equivalences} and \emph{higher-order weak bisimulations} by applying the constructions given in \cite{hjs:lmcs2007,bp:concur2016,bmp:jlamp2015,mp:arxiv2013weak}.

This work sheds some light on the very nature of higher-order behaviour. In fact, there is no general consensus on a \criterium for deciding when a calculus is ``higher-order''. We think that a calculus which is capable to ``deal'' with its own processes via some encoding should not be considered really higher-order, as much as first-order logic is not considered on a par with higher-order logic. The results presented in this paper suggest that \emph{a calculus has an higher-order behaviour if its behavioural functor is defined in terms of its own final coalgebra}. According to this \criterium, $\lambda$-calculus, CHOCS, HO$\pi$, are higher-order, while CCS, $\pi$-calculus, Ambient calculus are not. Indeed there are encodings of higher-order calculi in first-order ones (see \eg  \cite{sw:pibook} for an encoding from $\lambda$-calculus to $\pi$-calculus) suggesting that their targets ``have higher-order semantics''. Nonetheless, these encodings do not reflect strong bisimulations but only weaker notions that hide auxiliary steps introduced by the encoding.

To our knowledge, this is the first general construction of syntax-independent functors for higher-order behaviours.  Many authors have studied labelled transition systems and bisimulations for higher-order calculi (\eg, ``applicative'', ``environmental'', \etc); see \cite{abramsky:rtfp1990,thomsen:acta1993,sks:lics2007, sangiorgi:ic1996, lpss:lics2008} among others.  However, these approaches are tied to the syntactic presentations of specific languages, with no abstract definition of higher-order behaviours. Instead, our work achieves a clear separation between syntax and semantics, which can thus be covered within the bialgebraic theory  of SOS specifications \cite{klin:tcs2011,tp:lics1997}.

The diagram shown in \cref{fig:unfolding-thm-solution} suggests that horizontal layers characterise behaviours approximating higher-order ones. From this perspective, an higher-order behaviour is uniquely defined by countable family of finite-order approximations. We plan to investigate this construction in the context of $\cat{C}$-valued sheaves over the downward topology on the first transfinite ordinal $\omega$. When $\cat{C}$ is \Set, this category is the well-known topos of trees \cite{bmss:lmcs2012}. Since the topology considered has a well-founded base, we would be able to work on higher-order behaviours by inductive arguments on their finite-order approximations.


{\small

}

\end{document}